\definecolor{key-color}{rgb}{0,0,0}
\lstdefinestyle{mystyle}
{
	%basicstyle=\large,
	basicstyle= {\ttfamily\large},
    keywordstyle = {\bf\color{key-color}},
    morekeywords = {for,forall,while,endwith,with,if,and,or,then,endfor,else,foreach,endif, end, function,do,return, in},
}
\definecolor{codegreen}{rgb}{0,0.6,0}
\definecolor{codegray}{rgb}{0.5,0.5,0.5}
\definecolor{codepurple}{rgb}{0.58,0,0.82}
\definecolor{backcolour}{rgb}{0.95,0.95,0.92}
\lstdefinestyle{mystyle2}{
    backgroundcolor=\color{backcolour},   
    commentstyle=\color{codegreen},
    keywordstyle=\color{magenta},
    numberstyle=\tiny\color{codegray},
    stringstyle=\color{codepurple},
    basicstyle=\footnotesize,
    breakatwhitespace=false,         
    breaklines=true,                 
    captionpos=b,                    
    keepspaces=true,                 
    numbers=left,                    
    numbersep=5pt,                  
    showspaces=false,                
    showstringspaces=false,
    showtabs=false,                  
    tabsize=2
}
\theoremstyle{thmstyleone}%
\newtheorem{theorem}{Theorem}%  meant for continuous numbers
\newtheorem{lemma}[theorem]{Lemma}% 
\theoremstyle{thmstyletwo}%
\newtheorem{example}{Example}%
\theoremstyle{thmstylethree}%
\newtheorem{definition}{Definition}%
\newcommand{\CalD}{\mathcal{D}}
\newcommand{\CalQ}{\mathcal{Q}}
\newcommand{\CalR}{\mathcal{R}}
\newcommand{\entailsp}{\models_{?}}
\newcommand{\comp}[1]{\bar{#1}} %% literal complement
\newcommand{\atomx}[2]{\textit{#1}({{#2}})}
\newcommand{\ccard}[1]{\textit{card}^*({{#1}})}
\newcommand{\cwidth}[1]{\textit{width}^*({{#1}})}
\newcommand{\pred}[2]{\textit{#1}/{{#2}}}
\newcommand{\vvector}[1]{{\bf{#1}}}
\newcommand{\NCQ}{\ensuremath{\mathsf{CQ}^{\neg}}}
\newcommand{\NUCQ}{\ensuremath{\mathsf{UCQ}^{\neg}}}
\newcommand{\CQ}{\ensuremath{\mathsf{CQ}}}
\newcommand{\UCQ}{\ensuremath{\mathsf{UCQ}}}
\newcommand{\DSF}{\ensuremath{\mathsf{DSF}}}
\newcommand{\CSF}{\ensuremath{\mathsf{CSF}}}
\newcommand{\existential}[1]{{#1}^\exists}
\newcommand{\disjunctive}[1]{{#1}^\vee{}}
\newcommand{\constraints}[1]{{#1}^\bot}
\newcommand{\Rules}{\ensuremath{\mathcal{R}}}
\newcommand{\size}[1]{\ensuremath{\vert{#1}\vert}}
\newcommand{\DDER}{\ensuremath{\mathsf{DDER}}}
\newcommand{\NCQa}{\ensuremath{\mathsf{CQa}^{\neg}}}
\newcommand{\fus}{\ensuremath{\textit{fus}}}
\newcommand{\dr}{\ensuremath{\textit{dr}}}
\newcommand{\cdr}{\ensuremath{\textit{cdr}}}
\newcommand{\clr}{\ensuremath{\textit{clr}}}
\newcommand{\completo}{\textsc{Completo}}
\newcommand{\ecompleto}{\textsc{ECompleto}}
\newcommand{\lubm}{\textsc{LUBM}}
\newcommand{\travel}{\textsc{Travel}}
\newcommand{\RAM}{\textsc{RAM}}
\newcommand{\dlgpp}{\textsc{DLGP+}}
\newcommand{\dlgp}{\textsc{DLGP}}
\begin{document}

\title[Connected Components and Disjunctive Existential Rules]{Connected Components and Disjunctive Existential Rules}

%%=============================================================%%
%% Prefix	-> \pfx{Dr}
%% GivenName	-> \fnm{Joergen W.}
%% Particle	-> \spfx{van der} -> surname prefix
%% FamilyName	-> \sur{Ploeg}
%% Suffix	-> \sfx{IV}
%% NatureName	-> \tanm{Poet Laureate} -> Title after name
%% Degrees	-> \dgr{MSc, PhD}
%% \author*[1,2]{\pfx{Dr} \fnm{Joergen W.} \spfx{van der} \sur{Ploeg} \sfx{IV} \tanm{Poet Laureate} 
%%                 \dgr{MSc, PhD}}\email{iauthor@gmail.com}
%%=============================================================%%

\author*[1]{\fnm{Enrique} \sur{Matos Alfonso}}\email{gardero@image.ece.ntua.gr}

\author[1]{\fnm{Giorgos} \sur{Stamou}}
\email{gstam@cs.ntua.gr}

\affil[1]{\orgdiv{Division of Computer Science}, \orgname{National Technical University of Athens}, \orgaddress{\city{Zografou}, \postcode{15780}, \state{Attica}, \country{Greece}}}

%%==================================%%
%% sample for unstructured abstract %%
%%==================================%%

\abstract{In this paper, we explore conjunctive query rewriting, focusing on queries containing universally quantified negation within the framework of disjunctive existential rules. %Conjunctive query rewriting, a foundational technique for query answering with rules, involves backward application of rules to generate a union of conjunctive queries (\UCQ{}) as a rewriting of the original query, aiming for completeness without the need for rules. 
We address the undecidability of the existence of a finite and complete \UCQ{}-rewriting and the identification of finite unification sets (\fus{}) of rules. We introduce new rule classes, \emph{connected linear rules} and \emph{connected domain restricted rules}, that exhibit the \fus{} property for existential rules. Additionally, we propose \emph{disconnected disjunction} for disjunctive existential rules to achieve the \fus{} property when we extend the introduced rule fragments to disjunctive existential rules. We present \ecompleto{}, a system for efficient query rewriting with disjunctive existential rules, capable of handling \NUCQ{}s with universally quantified negation. Our experiments demonstrate \ecompleto{}'s consistent ability to produce finite \UCQ{}-rewritings, and describe the performance on different ontologies and queries.}

\keywords{Disjunctive Rules, Queries with Negation, Backward Chaining and Query Rewriting.}

%%\pacs[JEL Classification]{D8, H51}

%%\pacs[MSC Classification]{35A01, 65L10, 65L12, 65L20, 65L70}

\maketitle

\section{Introduction}\label{intro}

Conjunctive query rewriting  \cite{thore, other_disj} is one main approaches to perform query answering in the presence of rules. Given a set of rules $\CalR$ and a query $\CalQ$, the rules are applied in a backward manner to generate a union of conjunctive queries (\UCQ{}), which is a rewriting of $\CalQ$. The aim of the process is to reach a \emph{complete} rewriting of $\CalQ$ that does not need the rules in order to represent all the answers of $q$. 
%\begin{example}   
%\end{example}

Conjunctive query rewriting is very convenient in scenarios where the system's data undergoes frequent modifications or expansions, and the queries and rules remain relatively stable.  Conversely, in situations where the data remains static, and different queries are to be responded to, a forward chaining approach \cite{forward} is more suitable for the query answering task.

In this paper, we focus on conjunctive query answering for queries that might contain universally quantified negation \cite{queries, queries2, db} with respect to disjunctive existential rules \cite{semantic, other_disj}. The existence of a finite and complete \UCQ{}-rewriting of an arbitrary query with respect to an arbitrary set of rules is an undecidable problem \cite{walking}. A set of rules that ensures the existence of a finite and complete \UCQ{}-rewriting with respect to any \UCQ{} is called \emph{finite unification set} (\fus{}). We also refer to a \fus{} as a \emph{rewritable} (\emph{first-order rewritable}) set of rules. Determining whether a set of rules is a \fus{} is also an undecidable problem. However, the authors in \cite{walking} provide a compilation of some properties that ensure the \fus{} property for the framework of existential rules. In the case of disjunctive existential rules, the only known way to obtain a \fus{} is expanding a \fus{} of existential rules with disjunctive existential rules that are disconnected, i.e., there are no variables shared between the hypothesis and the consequence of the rule.

In the present study, we introduce \emph{connected linear rules} and \emph{connected domain restricted rules}, two classes of rules that exhibit the \fus{} property for existential rules despite being more expressive than the classes they extend (linear rules and domain restricted rules respectively). The introduced classes do not ensure the \fus{} property for disjunctive existential rules. However, we introduce the concept of \emph{disconnected disjunction} for disjunctive existential rules that ensures those rule fragments to be a \fus{} even in the case of disjunctive existential rules.

In terms of practical implementation, we describe the latest version of our system, \ecompleto{}, which is designed to perform query rewriting with respect to disjunctive existential rules. This system is capable of handling \NUCQ{}s with universally quantified negation and offers an extension of \dlgpp{} for specifying disjunctive existential rules and negated atoms in queries.

To evaluate the performance of our system, we conducted experiments using two known ontologies, Lehigh University Benchmark \lubm{} \cite{lubm} and \travel{}, both enriched with additional axioms. We generate 500 \NUCQ{}s for each ontology and observed that \ecompleto{} successfully produce finite \UCQ{}-rewritings for all queries. Notably, the rewriting process for the \travel{} ontology is significantly faster and consumes less memory compared to the \lubm{} ontology.

\paragraph{Related Work}

To the best of our knowledge, the only work related to \UCQ{}-rewritings with respect to disjunctive existential rules was published by Lecl\'ere et al. \cite{other_disj}. The authors proposed a rewriting definition for disjunctive existential rules that is different to the one proposed in \cite{semantic}. Their definition uses piece rewritings to eliminate all the disjunctive components of a disjunctive existential rule and produce a conjunctive query. The main difference between their method and the one proposed in \cite{semantic} is that intermediate rules which may not lead to a \CQ{}, are avoided by them. Based on their findings, we could avoid expanding a disjunctive rule if one of its disjoints cannot be used to produce \UCQ{}-rewritings. The authors did not reference any implementations of their approach.
\paragraph{Paper Structure}
Section \ref{prelim} provides background concepts needed to understand the rest of the paper. Subsection \ref{setformulas} introduces concepts related to first-order logic formulas and how the elements of a formula are connected. Subsection \ref{er} presents the disjunctive existential rules framework. 
Section \ref{rew} focuses on rewritability. Subsection \ref{rew-disj} introduces the definition of \UCQ{}-rewriting for disjunctive existential rules and a backward chaining rewriting algorithm for disjunctive existential rules. Subsection \ref{fragments} mentions the rewritable existential rule fragments defined in the literature. Subsection \ref{expand} proposes two new existential rule classes that have the \fus{} property. It also introduces further restrictions in order to retain the \fus{} property when those rule classes are extended to disjunctive existential rules. 
Section \ref{eval} describes a new implementation and provides an experimental evaluation of the proposed rewriting algorithm. Section \ref{discussion} discusses the experimental evaluation. Finally, Section \ref{conclusion} concludes the paper and documents our most significant contributions. 
\section{Preliminaries}\label{prelim}

\subsection{Set Formulas and Connected Components}\label{setformulas}
%The reader is expected to be familiar with the standard definition of first-order logic formulas. In this paper, we restrict ourselves to FOL formulas without function symbols over a finite set of predicate names and a finite set of constant symbols. 
%We use the standard definitions for the entailment and equivalence of formulas. 

In this section, we provide essential background information which is necessary to understand the concepts and terminology used in this paper. We assume the reader is familiar with standard first-order logic (FOL) formulas. The reader is referred to \cite{Nienhuys-Cheng1997-1} in case a background reading is needed. 

We work exclusively with FOL formulas that lack function symbols and are built upon finite sets of predicates and constant symbols. Standard definitions for the entailment and equivalence of formulas are applied throughout this paper.

%The concepts of  conjunctive (disjunctive) set formulas (\CSF{}s and \DSF{}s) \cite{semantic} are introduced to help in the compact representation of disjunctive rules. However, both notations are often used in FOL when we write rules and clauses.

We introduce the concepts of conjunctive set formulas (\CSF{}s) and disjunctive set formulas (\DSF{}s) \cite{semantic} as tools for representing disjunctive rules in a compact way. However, both notations are often used in FOL to represent rules and clauses.

%A \emph{conjunctive (disjunctive) set formula} (\CSF{} and \DSF{}, respectively) is a set of formulas $\{F_1,\ldots,F_n\}$, interpreted as a conjunction (disjunction) of the formulas in the set.
%Let $\{F_1,\ldots,F_n\}$ be a set of formulas. A \CSF{} containing these formulas is written as $F_1,\ldots,F_n$. 
%On the other hand, a \DSF{} is written as $[F_1,\ldots,F_n]$.  An empty \CSF{} is equivalent to $\top$, and an empty \DSF{} is equivalent to $\bot$.
%We use parenthesis to avoid ambiguity when necessary, e.g., $[(A,B), D]$ is equivalent to $(A\wedge B)\vee D$.

A \emph{conjunctive set formula} (\CSF{}) is a set of formulas denoted as $F_1, \ldots, F_n$ and interpreted as the conjunction of these formulas $F_i$. A \emph{disjunctive set formula} (\DSF{}) is represented as $[F_1, \ldots, F_n]$ and interpreted as the disjunction of its formulas $F_i$. An empty \CSF{} is equivalent to $\top$, and an empty \DSF{} is equivalent to $\bot$.
Parentheses are used when necessary to avoid ambiguity, e.g., $[(A, B), D]$ is equivalent to $(A \land B) \lor D$.

%Axioms $A_i$ of an entailment operation are represented as a \CSF{} and a \DSF{} is used to represent consequences $Q_j$:\[ A_1,\ldots,A_n \models [Q_1,\ldots,Q_m].\]

In the context of the entailment operation, we represent the axioms $A_i$ using \CSF{}s and the consequences $Q_j$ using \DSF{}s, i.e.,
\[ A_1, \ldots, A_n \models [Q_1, \ldots, Q_m]. \]

%A \emph{term} is a constant or a variable. 
%An \emph{atom} is a formula $a(t_1,\ldots,t_n)$ where $a$ is a \emph{predicate} of \emph{arity} $n$. The arguments $t_i$ of the atom are terms. A \emph{literal} is an atom or a negated atom. Let $A$ be an atom. The \emph{complement} $\comp{l}$ of a literal $l=A$ is $\neg A$. The \emph{complement} $\comp{l}$ of a literal $l=\neg A$ is $A$. A literal $l=A$ is \emph{positive} (or of \emph{positive polarity}). On the other hand, a literal $l=\neg A$ is \emph{negative} (or of \emph{negative polarity}). Two literals are \emph{complementary} if one is the complement of the other.

A \emph{term} can be a constant or a variable.
An \emph{atom} is a formula of the form $a(t_1, \ldots, t_n)$, where $a$ is a \emph{predicate} with \emph{arity} $n$, and the arguments $t_i$ are terms.
A \emph{literal} is either an atom or its negation. A literal $l = A$ is \emph{positive}, while $l = \neg A$ is \emph{negative}.
The \emph{complement} $\comp{l}$ of a literal $l$ is the literal with the opposite sign, i.e., $\comp{A} = \neg A$ and $\comp{\neg A} = A$. Two literals $l_1$, $l_2$ are \emph{complementary} if one is the complement of the other $\comp{l_1}=l_2$.

%The expression $\atomx{vars}{F}$ denotes the set of all the variables that appear in $F$. The variables of a formula can be \emph{universally quantified}, \emph{existentially quantified}, or \emph{free}.  A \emph{closed} formula contains no free variable. A \emph{ground} formula contains no variable.

The expression $\atomx{vars}{F}$ denotes the set of variables appearing in a formula $F$.
In a formula,  variables can be \emph{universally quantified}, \emph{existentially quantified}, or \emph{free}.
A formula is \emph{closed} when it contains no free variables. A formula is \emph{ground} if it contains no variables at all.

A \emph{substitution} $\theta = \{X_1\leftarrow t_1,\ldots X_n\leftarrow t_n\}$ is a finite mapping of variables $X_i$ to terms $t_i$. %The expression $F\theta$ is the result of applying a substitution $\theta$ on an expression $F$ and it is obtained by replacing in $F$ every occurrence of every variable $X_i$ by the term $t_i$. 
The result of applying a substitution $\theta$ to a formula $F$ is denoted as $F\theta$ and it is obtained by replacing in $F$ every occurrence of every variable $X_i$ by the corresponding term $t_i$. 

%Let $F$ be an expression and $\theta$ the substitution  $\{X_1\leftarrow Y_1,\ldots X_n\leftarrow Y_n\}$. We say $\theta$ is a \emph{renaming substitution} for $F$, if each $X_i$ occurs in $F$, and $Y_1,\ldots, Y_n$ are distinct variables such that each $Y_i$ is either equal to some $X_j$ in $\theta$, or $Y_i$ does not occur in $F$.

A substitution $\theta$ is a \emph{renaming substitution} for the expression $F$, if each $X_i$ occurs in $F$, the variables $Y_1,\ldots, Y_n$ are distinct and each $Y_i$ is either equal to some $X_j$ in $\theta$, or $Y_i$ does not occur in $F$.

The \emph{composition} of two substitutions, $\theta$ and $\sigma$, is a new substitution $\theta\sigma$ such that $F\theta\sigma=(F\theta)\sigma$ for any expression $F$. A substitution $\theta$ is \emph{more general than} another substitution $\sigma$ if there exists a substitution $\gamma$ such that $\sigma=\theta\gamma$.

A \emph{unifier} for a set of expressions $S=\{F_1 ,\ldots,F_n\}$ is a substitution $\theta$ such that, $F_1\theta=F_2\theta,\ldots,F_{n-1}\theta=F_n\theta$. The expressions in $S$ are \emph{unifiable}, if there is a unifier for $S$. 
The most general unifier (\emph{mgu}) of $S$ is a unifier that is more general than any other unifier for $S$ and is unique up to variable renaming.
%A unifier for $S$ that is more general than any other unifier of $S$ is the \emph{most general unifier} (\emph{mgu}) of $S$ and it is denoted by $mgu(S)$. Different \emph{mgu}s for the same set $S$ are unique up to variable renaming.

A \emph{hypergraph} is a pair $\left\langle A, E\right\rangle$ of nodes $A$ and \emph{hyperedges} $E$, where the hyperedges are non-empty subsets of $A$. We represent a \CSF{} of atoms $F$ by a hypergraph $\left\langle A, E\right\rangle$, where $A=\atomx{vars}{F}$.  Each atom $A_i$ in $F$ represents a hyperedge $\atomx{vars}{A_i}\in E$ that connects its variables. Following, we introduce some properties for \CSF{}s of atoms that follow from the defined hypergraph representation.

Let $F$ be a \CSF{} of atoms. The  \emph{cardinality} of a formula $F$ is the number of variables in the formula, i.e., $\atomx{card}{F}=\vert\atomx{vars}{F}\vert$. The \emph{width} of a formula $F$ (denoted by $\atomx{width}{F}$) is the number of atoms that have at least one variable in their arguments.
Two variables $u$ and $v$ in $\atomx{vars}{F}$ are \emph{connected} iff they both belong to the some hyperedge ($\exists{A} \in F \vert \{v,u\} \subseteq \atomx{vars}{A}$), or iff there is another variable $z$ in $F$ that is connected to both $u$ and $v$. 

A \CSF{} of atoms $F$ is \emph{connected} iff all the atoms in it contain variables and all the variables are connected to each other.  An atom that has only constants in its arguments is a connected formula by itself and it is represented by an empty hypergraph. The constants in the formula play no role in their hypergraph representation. 

A \CSF{} $F$ can be partitioned into a set $\{U_1,\ldots,U_n\}$ of connected \CSF{}s such that if $v\in \atomx{vars}{U_i}$ is connected to $u\in \atomx{vars}{U_j}$, then $i=j$. The formula $F$ is equivalent to a \CSF{} of its connected components, i.e., $F=U_1,\ldots,U_n$.

The \emph{connected cardinality} and \emph{connected width} of $F$ (denoted by $\ccard{F}$ and $\cwidth{F}$) is defined as the maximum cardinality and width among its connected components, respectively. In other words, $\ccard{F}=\max_i{(\atomx{card}{U_i})}$ and $\cwidth{F}=\max_i{(\atomx{width}{U_i})}$. 

The connected cardinality and connected width of a \DSF{} $F=[F_1,\ldots,F_m]$ is the maximum connected cardinality and width of the formulas $F_i$, respectively, i.e., $\ccard{F}=\max_i{(\ccard{F_i})}$ and $\cwidth{F}=\max_i{(\cwidth{F_i})}$.

\begin{lemma}
\label{partition_sep}
Let $G$ be a \CSF{} and let $\{U_1,\ldots,U_n\}$ be the partition of a given  \CSF{} $F$ of atoms into connected \CSF{}s $U_i$.  Then,
  \[
  \begin{array}{rl}
      G \models F ~~\text{iff}~~ &  G \models U_i ~\text{for every }~ U_i.\\
       
  \end{array}
  \
  \]
 
\end{lemma}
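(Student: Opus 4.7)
The paper has already observed that a CSF $F$ of atoms is literally identical, as a CSF, to the enumeration of its connected components, i.e.\ $F = U_1,\ldots,U_n$. So the statement reduces to
\[
G \models U_1,\ldots,U_n \;\;\text{iff}\;\; G \models U_i \text{ for every } i,
\]
and my plan is to prove the two directions separately.

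The forward direction (\emph{only if}) is immediate from the semantics of conjunction: if every model of $G$ satisfies the whole CSF $U_1,\ldots,U_n$, then in particular it satisfies each conjunct $U_i$, so $G \models U_i$. No use of connectedness is needed here.

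For the backward direction (\emph{if}), the connectedness of the components is what I would actually use. The key structural fact is that the variable sets of the components are pairwise disjoint, $\atomx{vars}{U_i} \cap \atomx{vars}{U_j} = \emptyset$ for $i \neq j$; this is immediate from the definition of connected component, since any variable shared between two hyperedges would force them into the same component. Interpreting each open CSF with its free variables existentially quantified (as is standard in the query-answering setting of this paper), so that $G \models U_i$ means $G \models \exists \vec{x}_i\, U_i$ with $\vec{x}_i = \atomx{vars}{U_i}$, the disjointness lets the quantifiers distribute across the conjunction:
\[
\bigwedge_{i=1}^{n} \exists \vec{x}_i\, U_i \;\equiv\; \exists \vec{x}_1 \cdots \exists \vec{x}_n \bigl(U_1 \wedge \cdots \wedge U_n\bigr) \;\equiv\; \exists \vec{x}\, F,
\]
where $\vec{x} = \atomx{vars}{F} = \bigcup_i \vec{x}_i$. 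Hence $G \models U_i$ for every $i$ is equivalent to $G \models F$.

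The only (minor) obstacle is justifying the distribution of $\exists$ over $\wedge$: this is a standard first-order identity, but it requires exactly the disjointness of the variable sets guaranteed by the connected-component partition, which is precisely the role of the connectedness hypothesis and the whole motivation for introducing the hypergraph-based notion of connected component earlier in the section.
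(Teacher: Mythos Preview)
Your proposal is correct and follows essentially the same approach as the paper: the paper's proof sketch notes exactly your key point, that the connected components share no variables and hence the per-component variable assignments can be combined without conflict, which is precisely your distribution of $\exists$ over $\wedge$ under disjoint variable sets. The paper then defers the details to Tessaris, whereas you spell them out; no substantive difference in strategy.
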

\begin{proof}
Given that no variables are shared between the connected components $U_i$, we can safely combine the assignments for the variables within each $U_i$ without introducing conflicts in the values assigned to each variable. 
Detailed proof for this lemma is given by Tessaris \cite{tessaris}.  
\end{proof}

For \CSF{}s with a bounded connected cardinality or the connected width, we can ensure the existence of a finite number of equivalence classes for these formulas.
\begin{lemma}
\label{finite_cardk}
Let $k$ be a natural number. There is a finite number of equivalence classes of \CSF{}s of atoms with connected cardinality of at most $k$ that can be constructed using a finite set of predicates and constants.

\end{lemma}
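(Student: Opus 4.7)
The plan is to decompose any qualifying \CSF{} into its connected components and then count at each level. Fix the finite predicate signature and the finite constant set; write $p$ for the number of predicates, $a$ for the maximum arity, and $c$ for the number of constants.

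First I would bound the number of equivalence classes of \emph{connected} \CSF{}s whose variable set has cardinality at most $k$. Choose canonical variable names $x_1, \ldots, x_k$: the total number of distinct atoms that can be written using at most these $k$ variables plus the $c$ constants is bounded by $p(k+c)^a$, so the number of subsets of such atoms (and hence the number of candidate connected \CSF{}s on these variables) is bounded by $2^{p(k+c)^a}$. In particular, only finitely many equivalence classes of connected \CSF{}s with at most $k$ variables exist, since any connected \CSF{} with at most $k$ variables is, up to variable renaming, one of the finitely many candidates above.

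Next, given a \CSF{} $F$ with $\ccard{F}\le k$, Lemma \ref{partition_sep} together with the definition of connected components tells us that $F$ is equivalent to the conjunction $U_1,\ldots,U_n$ of its connected components, and each $U_i$ uses at most $k$ variables. Each $U_i$ therefore falls into one of the finitely many equivalence classes identified above. Because the $U_i$ share no variables, two components that are equivalent up to renaming can be absorbed: under the intended existential semantics of variables in a \CSF{}, $U_i \land U_j \equiv U_i$ whenever $U_j$ is a variable-disjoint renaming of $U_i$. Consequently $F$ is determined, up to equivalence, by the \emph{subset} of canonical connected components that appear in it, and there are only finitely many such subsets of the finite pool constructed in the previous step.

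The main obstacle will be making the canonicalisation step fully rigorous. Two points require care: first, components may share constants (variables are disjoint, but constants from the fixed finite set can reappear across components), so the renaming-and-absorption argument must treat constants as rigid and variables as flexible; second, one must be explicit that replacing each $U_i$ by a canonical representative yields a formula logically equivalent to $F$, which is exactly what Lemma \ref{partition_sep} licenses by allowing us to recombine components freely. Provided a bounded supply of fresh variable names is available when assembling the representative of $F$ from its chosen subset of canonical components, which is possible since the pool is finite, the bound $2^{2^{p(k+c)^a}}$ on the number of equivalence classes of \CSF{}s with connected cardinality at most $k$ follows.
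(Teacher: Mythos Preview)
Your proposal is correct and follows essentially the same route as the paper: bound the number $M$ of connected \CSF{}s on at most $k$ variables, argue that variable-disjoint copies of the same connected component can be absorbed (via Lemma~\ref{partition_sep} and the existential reading of the variables), and conclude that any \CSF{} with connected cardinality at most $k$ is determined up to equivalence by a subset of these $M$ canonical pieces, giving at most $2^M$ classes. Your version is somewhat more explicit than the paper's---you actually exhibit the bound $M\le 2^{p(k+c)^a}$ and flag the care needed with constants versus variables in the canonicalisation step---but the structure of the argument is the same.
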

\begin{proof}
The equivalence of two \CSF{}s composed of atoms holds if and only if they can be unified through a renaming substitution.
%Since we have finitely many predicates and constant symbols, and at most $k$ different variables, we can combine them in a finite number of ways $M$ to form a connected \CSF{}. 
 Given that we have a finite set of predicates and constant symbols, and a maximum of $k$ distinct variables, we can construct \CSF{}s in a finite number of ways, denoted as M, to ensure they remain connected.
In a \CSF{} consisting of more than $M$ connected \CSF{}s we know that some of the connected components are renamings of others, and keeping only one of them yields a formula equivalent to $F$ (according to Lemma \ref{partition_sep}). Hence, there are at most $2^M$ different equivalence classes for \CSF{}s with a maximum of $k$ distinct variables.
\end{proof}

\begin{lemma}
\label{finite_cardk_edges}
Let $k$ be a natural number. There are a finite number of equivalence classes of \CSF{}s of atoms with connected widths of at most $k$ that can be constructed using a finite set of predicates and constants.

\end{lemma}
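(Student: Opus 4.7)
The plan is to reduce this statement to Lemma \ref{finite_cardk} by showing that a bound on connected width forces a bound on connected cardinality. Since the set of predicates is finite, there is a maximum arity $r$ among them. If $F$ is a \CSF{} with $\cwidth{F} \le k$, then each connected component of $F$ contains at most $k$ atoms that carry at least one variable, and each such atom contributes at most $r$ variable arguments. Therefore every connected component has at most $k \cdot r$ distinct variables, so $\ccard{F} \le k \cdot r$.

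Next I would handle the ground components. An atom whose arguments are all constants forms a connected component on its own (as noted in the text) and contributes $0$ to the connected width. Since there are only finitely many predicates and constants, there are finitely many such ground atoms, and so they can be incorporated without breaking the cardinality bound: Lemma \ref{finite_cardk} applies uniformly to all connected components (the ground ones trivially satisfy any bound $k \cdot r \ge 0$).

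Having established $\ccard{F} \le k \cdot r$ for every \CSF{} $F$ with $\cwidth{F} \le k$, I would invoke Lemma \ref{finite_cardk} with the constant $k \cdot r$ in place of $k$ to conclude that there are only finitely many equivalence classes of such \CSF{}s.

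The main obstacle is minor: one must make sure that the definitions used for ``width'' and ``connected component'' interact correctly with ground atoms, so that bounding width really does bound the number of variable-bearing atoms in each component. Once that is clarified, the reduction to Lemma \ref{finite_cardk} is routine, and no new combinatorial argument is required.
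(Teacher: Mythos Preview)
Your proposal is correct and follows exactly the paper's own approach: the paper's proof simply states that an upper bound on the connected width of a \CSF{} implies an upper bound on its connected cardinality (since the predicate set is finite), and then invokes Lemma~\ref{finite_cardk}. Your version is in fact more explicit, spelling out the bound $k \cdot r$ via the maximum arity $r$ and addressing the ground-atom case, but the underlying idea is identical.
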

\begin{proof}
This follows directly from Lemma \ref{finite_cardk} because an upper bound for the connected width of a \CSF{} implies that there is also an upper bound for the connected cardinality for a set of finitely many predicates.
\end{proof}

\subsection{Disjunctive Existential Rules Framework}\label{er}

A \emph{conjunctive query} (\CQ{}) is a \CSF{} $l_1,\ldots,l_n$ of positive literals (atoms) $l_i$ where all the variables $\vvector{X} = \atomx{vars}{ l_1,\ldots,l_n }$ are existentially quantified, i.e., an expression of the form $\exists{\vvector{X}}~ l_1,\ldots,l_n$.
Queries that permit negation in the literals $l_i$ are referred to as \emph{conjunctive queries with negation} (\NCQ{}). All the variables $\vvector{X}$ that appear in the positive literals of a \NCQ{} are existentially quantified. In this paper, we make use of \emph{universally quantified negation} \cite{queries,db, queries2}, i.e., all the  variables $\vvector{Z}$ that appear only in  negative literals  are universally quantified: $\exists{\vvector{X}}\forall{\vvector{Z}}~ l_1,\ldots,l_n.$. Universal quantification ensures that our queries are safe and domain-independent. 
 From now on, we exclude quantifiers in queries, given that the rules for writing them are well-established.
 The set of variables that appear in both positive and negative literals is called the \emph{frontier} of the query.
 %\todo{introduce answer variables and talk about them}
%Note that for now, we do not introduce the concept of \emph{answer variables}. 
The queries we define are commonly referred to as \emph{Boolean conjunctive queries}. Throughout the paper, by conjunctive query, we mean Boolean conjunctive query. It is important to mention that the algorithms introduced in this paper can also be modified to accommodate queries with answer variables, as detailed in \cite{semantic}.

A \DSF{} of conjunctive queries (conjunctive queries with negation) is referred to as a \emph{union of conjunctive queries} (\UCQ{}) (\emph{union of conjunctive queries with negation} (\NUCQ{})). For a \NUCQ{} $\CalQ$, the set of \NCQ{}s in $\CalQ$ with exactly $k$ negated atoms is denoted by $\CalQ^{\neg k}$, and the set of \NCQ{}s in $\CalQ$ with two or more negated atoms is denoted by $\CalQ^{\neg \#}$. We use the term \emph{query} to refer to either a \CQ{}, \NCQ{}, \UCQ{} or \NUCQ{}.

A \emph{fact} is a \CSF{} $a_1,\ldots,a_n$ of atoms $a_i$, where all variables are assumed to be existentially quantified. Note that we omit the explicit use of existential quantifiers.
%The definition of facts is equivalent to the definition Boolean conjunctive query. However, facts are used to express existing knowledge, while queries represent formulas to be entailed and they have different roles in the process of reasoning.
The concept of facts aligns with the definition of Boolean conjunctive queries. Nevertheless, it is essential to distinguish their roles: facts are employed to convey existing knowledge, whereas queries serve as formulas to be validated, having distinct roles in the reasoning process.

A \emph{rule} is a closed formula  of the form
\[\forall{\vvector{X}}\, \exists{\vvector{Y}} ~ B \rightarrow H,\]
where the \emph{body} $B$ is a \CSF{} of atoms, and the \emph{head} $H$ is a \DSF{} in which all $H'\in H$ are \CSF{}s of atoms. 
%The set $\vvector{X}=\atomx{vars}{B}$ contains the variables that appear in the body, and they are universally quantified. On the other hand, $\vvector{Y}=\atomx{vars}{H}\setminus \atomx{vars}{B}$ are the variables that appear only in the head. They are existentially quantified, and they are called \emph{existential variables}. The \emph{frontier} of a rule is the set of variables that are present in both the body and head of the rule: $\atomx{vars}{B}\cap\atomx{vars}{H}$. We omit quantifiers when writing a rule.
The set $\vvector{X}=\atomx{vars}{B}$ includes the variables which occur in the body and are universally quantified. On the other hand, $\vvector{Y}=\atomx{vars}{H}\setminus \atomx{vars}{B}$ are the variables exclusively appearing in the head of the rule. These variables are existentially quantified and are referred to as \emph{existential variables}. The \emph{frontier} of a rule refers to the variables present in both the body and head of the rule, denoted as $\atomx{vars}{B}\cap\atomx{vars}{H}$. We simplify the notation by omitting quantifiers when expressing a rule.

A \emph{disjunctive existential rule} is a rule with more than one disjoint element in the head, i.e., $\size{H}>1$. In contrast, an \emph{existential rule} is a rule featuring exactly one disjoint element in the head. For simplicity, we represent the head of the existential rule as a \CSF{} of atoms.  
A \emph{negative constraint} is a rule with an empty disjoint in the head, i.e., $B\rightarrow\bot$. In contexts where it is evident that we are referring to a negative constraint, we may omit the ``$\rightarrow\bot$''. %Sometimes we also refer to a negative constraint as a constraint.
Occasionally, we also use the terms constraint and negative constraint interchangeably.

%A \CSF{} of atoms $Q$ \emph{depends} on a rule $r$ iff there is a \CSF{} of atoms $F$ such that $F\nmodels Q$ and $F, r\models Q$. A rule $r_i$ depends on a rule $r_j$ iff the body of $r_i$ depends on $r_j$. The concept of rule dependencies allows us to define the \emph{graph of rule dependencies} (GRD) \cite{aGRD}, which is a graph where nodes are rules, and a directed edge between two nodes represents the existence of a dependency between the corresponding rules. 

%A \emph{knowledge base} (\KB{}) $\CalK=\left\langle\CalR,\CalD\right\rangle$ is composed by a \CSF{} $\CalR$ of rules and a \CSF{} of facts $\CalD$. 
%For a set of rules $\Rules$, the set of constraints in $\Rules$ is $\Rules^\bot$, the set of existential rules is $\Rules^\exists{}$ and the set of disjunctive existential rules is $\Rules^\vee{}$. 
In the context of a rule set $\Rules$, we denote the set of constraints within $\Rules$ as $\Rules^\bot$, the set of existential rules as $\Rules^\exists{}$ and the set of disjunctive existential rules as $\Rules^\vee{}$. A set of facts is often denoted as $\CalD$.
%A knowledge base  $\left\langle\CalR,\CalD\right\rangle$  is a \emph{disjunctive} knowledge base (\DKB{}) if $\Rules^\vee\neq\emptyset$, otherwise it is an \emph{existential} knowledge base (\EKB{}).

In this paper, we study the \emph{query entailment} problem with respect to disjunctive existential rules, i.e., 
\begin{equation}
\label{entailment}
    \CalR, \CalD \entailsp \CalQ. 
\end{equation}

We address the problem \eqref{entailment} by transforming it into the entailment of a \UCQ{} $\CalQ'$ with respect to the facts $\CalD$, i.e.,
\[
\CalD\entailsp \CalQ'.
\]

A \UCQ{} $\CalQ'$ is a \emph{\UCQ{}-rewriting} of $Q$ with respect to $\CalR$ if for any $\CalD$ when the following condition holds:
\begin{equation}
\label{rweq}
\CalD \models \CalQ' ~~\text{implies}~~\CalR, \CalD \models \CalQ.
\end{equation}
Every \CQ{}s in $\CalQ'$ is a \CQ{}-\emph{rewriting} of $Q$ with respect to $\CalR$.
If the converse of \eqref{rweq}
\[
\CalR, \CalD \models \CalQ~~\text{implies}~~\CalD \models \CalQ'
\]
is also true for any $\CalD$, then $\CalQ'$ is a \emph{complete} \UCQ{}-rewriting of $\CalQ$ with respect to $\CalR$, i.e.,
\[
\CalR, \CalD \models \CalQ~~\text{iff}~~\CalD \models \CalQ'
\]
%Note that according to our definition, a \UCQ{}-rewriting may not be complete. In this respect, our definition follows the definition of \UCQ{}-rewriting  from \cite{thore} because we extend many of the concepts and algorithms proposed by the authors.

The negative constraints in the entailment problem can be transformed into queries, i.e.,
\[\existential{\CalR},\disjunctive{\CalR}, \constraints{\Rules}, \CalD \models \CalQ ~~~\text{iff}~~~ \CalR,\CalR^\vee, \CalD \models \neg \constraints{\Rules}, \CalQ.\]
Additionally, the conjunctive queries with negation, $\CalQ^{\neg 1}$ and $\CalQ^{\neg \#}$ can be transformed into existential rules and disjunctive existential rules respectively:
\begin{equation}
\label{reduction}
\begin{array}{l}
    \existential{\CalR},\disjunctive{\CalR}, \constraints{\CalR},\CalD \models \CalQ ~~~\text{iff}~~~  \\
     \phantom{x}(\existential{\CalR}, \neg\CalQ^{\neg 1}),(\disjunctive{\CalR}, \neg\CalQ^{\neg \#}), \CalD \models \neg \constraints{\CalR}, \CalQ^{\neg 0}, 
\end{array}
\end{equation}

Consequently,  for the remainder of this paper, we focus on the problem of finding a \UCQ{}-rewriting of an input \UCQ{}s with respect to (disjunctive) existential rules without constraints.  

\section{Rewritability}\label{rew}

%We focus on class of rules that ensure the existence of a finite and complete  \UCQ{}-rewriting for every input query.

In this Section, we begin by presenting an algorithm for finding \UCQ{}-rewritings introduced in \cite{semantic}. The algorithm relies on general rewriting steps utilizing disjunctive rules and conjunctive queries to generate rules with fewer disjoints and eventually other existential rules. After, we introduce the existing rewritable rule fragments defined in the literature. We also present two new existential rule fragments that also have the \fus{} property. Additionally, we introduce restrictions that ensure the \fus{}  property of those rule fragments for the case of disjunctive rules.

\subsection{Rewriting Steps and Algorithms for Disjunctive Existential Rules}\label{rew-disj}

A  \emph{piece-based} rewriting step, as defined within the existential rules framework \cite{thore}, corresponds to the rewriting step defined in \cite{semantic}.

\begin{definition}[Rewriting Step]
\label{rewrite}
Let $r =  B\rightarrow H$ be an existential rule, and $Q$ a conjunctive query. If there is a subset $H'\subseteq H$ that unifies with some $Q'\subseteq Q$ through a mgu $\theta$  (i.e., $H'\theta=Q'\theta$) such that
\begin{enumerate}
    \item if $v \in \atomx{vars}{Q\setminus Q'}$ and $v\neq v\theta$, then $v\theta$ is a frontier variable of $r$ or a constant, and
    \item if $v$ is an existential variable of the rule $r$, then $v\theta\notin  \atomx{vars}{Q\setminus Q'}$, 
\end{enumerate}
then the query $(B \cup (Q\setminus Q')) \theta$ is a \emph{rewriting} of $Q$ using the existential rule $r$.
\end{definition}

The authors in \cite{semantic}  define a corresponding rewriting step for a disjunctive existential rule and a \CQ{}. It is a generalization of  Definition \ref{rewrite} with the goal to support both existential rules and disjunctive existential rules.

\begin{definition}[General (Disjunctive) Rewriting Step]
\label{drewrite}
Let $r = B\rightarrow H$ be a rule, and $Q$ a conjunctive query. If there is a subset $H'\subseteq H$, and for each $h_i \in H'$ there is a subset $h'_i\subseteq h_i$ that unifies with a $Q' \subseteq Q$ through a mgu $\theta$ (i.e., $h'_1\theta = \ldots h'_n\theta = Q'\theta$) such that

\begin{enumerate}
    \item if $v \in \atomx{vars}{Q\setminus Q'}$ , then $v\theta$ is a frontier variable of $r$ or a constant, and
    \item if $v$ is an existential variable of the rule $r$, then $v\theta\notin  \atomx{vars}{Q\setminus Q'}$,
\end{enumerate}

then $(B \cup  (Q\setminus Q') \rightarrow H \setminus H') \theta$ is a \emph{rewriting} of $Q$ using the rule $r$. A rewriting step is a \emph{disjunctive rewriting step} if the rule used is a disjunctive existential rule. 
\end{definition}

A disjunctive rewriting step results in either a disjunctive rule with fewer disjunctive components, an existential rule in case $\size{(H \setminus H')\theta}=1$ or a negative constraint (the negation of a conjunctive query) in case $H=H'$.

\begin{example}
Consider the following disjunctive existential rule:
\[
\begin{array}{rl}
     r_1=\atomx{diabetesRisk}{X} \rightarrow 
     & [(\atomx{diabetic}{Y}, \\ 
     & \phantom{xxxx}\atomx{sibling}{Y,X}),\\
     & \phantom{(}(\atomx{diabetic}{Z},\\
     & \phantom{xxxx}\atomx{parent}{Z,X})]. \\
\end{array}
\]
If we want to rewrite the query $Q=\atomx{diabetic}{X_1}$, to verify the presence of people with diabetes, we can obtain the \UCQ{}-rewriting $[\atomx{diabetic}{X_1}, diabetesRisk(X)],$ using $r_1$ with the unifier $\theta=\{Y\xleftarrow{} X_1, Z\xleftarrow{} X_1\}$.

Alternatively, if together with $r_1$ we have a query corresponding to a negative constraint $q_c=\atomx{singleChild}{X_1},\atomx{sibling}{Y_1,X_1}$ and the conjunctive query $Q'=diabetic(Y_2),parent(Y_2,X_2)$ verifying the existence of a diabetic parent, we obtain the following existential rule
\[
\begin{array}{rl}
     \atomx{diabetesRisk}{X}, \atomx{singleChild}{X} \rightarrow & \atomx{diabetic}{Z}, \\
     & \atomx{parent}{Z,X},
\end{array}
\]
as the result of rewriting $q_c$ using the rule $r_1$ and the unifier $\theta_2=\{X_1\leftarrow X, Y_1\leftarrow Y\}$.
Using the new existential rule we obtain the following \UCQ{}-rewriting: 
\begin{equation*}
    \begin{array}{l}
         [(\atomx{singleChild}{X},\atomx{sibling}{Y,X}), \\
         \phantom{[}(diabetic(Y),parent(Y,X)),\\
         \phantom{[}(\atomx{diabetesRisk}{X}, \atomx{singleChild}{X})].
 
    \end{array}
\end{equation*}
Note that the final \UCQ{}-rewriting also contains queries that correspond to constraints. They are possible reasons for which a query can be entailed, i.e., inconsistency in our facts.
\end{example}

Using the above-mentioned rewriting steps, we define \emph{rewriting} with respect to  disjunctive rules.

\begin{definition}[Rewriting]\label{rwdef}
Let $\left\langle\CalR,\CalQ\right\rangle$ be a tuple consisting of a set $\Rules$ of rules and a \UCQ{} $\CalQ$. A \emph{one-step rewriting} $\left\langle\CalR',\CalQ'\right\rangle$ of $\left\langle\CalR,\CalQ\right\rangle$ is obtained by adding to $\CalR$ or to $\CalQ$, as appropriate, the result $f'$ of a general rewriting step  that uses one of the conjunctive queries in $\CalQ$ and a rule in $\Rules$, i.e., $\CalQ' = \CalQ\cup (\neg f')$ if $f'$ is a negative constraint, $\CalQ' = \CalQ\cup f'$ if $f'$ is a conjunctive query, otherwise $\CalR' = \CalR\cup (f')$.

A \emph{$k$-step rewriting} of $\left\langle\CalR,\CalQ\right\rangle$ is obtained by applying a one-step rewriting to a $(k-1)$-step rewriting of $\left\langle\CalR,\CalQ\right\rangle$. For any $k$, a $k$-step rewriting of $\left\langle\CalR,\CalQ\right\rangle$ is a \emph{rewriting} of $\left\langle\CalR,\CalQ\right\rangle$. 

\end{definition}

A rewriting as defined in \ref{rwdef} is sound and complete.

\begin{theorem}[Soundness and Completeness of Rewritings]
Let $\left\langle \CalR,\CalD\right\rangle$ be a tuple consisting of a set $\Rules$ of rules and a \UCQ{} $\CalQ$. Then $\CalR,\CalD \models \CalQ$ iff there is a rewriting  $\left\langle\CalR',\CalQ'\right\rangle$ of $\left\langle \CalR,\CalQ\right\rangle$ such that  $\CalD \models Q_i$ for some conjunctive query $Q_i$ in $\CalQ'$.
\end{theorem}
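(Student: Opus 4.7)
The plan is to prove the biconditional by induction on the number of rewriting steps, handling each direction separately.

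For soundness (the ``if'' direction), I would induct on $k$, the number of steps used to derive $\left\langle\CalR',\CalQ'\right\rangle$ from $\left\langle\CalR,\CalQ\right\rangle$. The base case $k=0$ is immediate: $\CalR'=\CalR$ and $\CalQ'=\CalQ$, so $\CalD\models Q_i$ for some $Q_i\in\CalQ$ gives $\CalD\models\CalQ$, hence $\CalR,\CalD\models\CalQ$. For the inductive step, it suffices to show that a one-step rewriting preserves the property. Concretely, if $\left\langle\CalR'',\CalQ''\right\rangle$ is obtained from $\left\langle\CalR',\CalQ'\right\rangle$ by applying Definition~\ref{drewrite} to rule $r\in\CalR'$ and query $Q\in\CalQ'$ with result $f'$, I would verify that $\CalD\models f'$ together with $\CalR',\CalD\models\CalR'$ implies $\CalR',\CalD\models Q$ (and analogously for the case where $f'$ is a rule). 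This reduces to a semantic check on the general rewriting step: the mgu $\theta$ aligns body atoms of $r$ with $Q'\subseteq Q$, so any model of $\CalR'\cup\CalD$ satisfying $(B\cup(Q\setminus Q'))\theta$ must trigger the disjunctive head of $r$, and by construction at least one satisfied disjunct either subsumes $Q'\theta$ or produces the reduced rule in $\CalR''$. Conditions~(1) and~(2) of Definition~\ref{drewrite} are precisely what is needed to keep frontier variables universally quantified and to prevent existential variables from escaping to $Q\setminus Q'$.

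For completeness (the ``only if'' direction), I would use a proof-tree argument. Assuming $\CalR,\CalD\models\CalQ$, by the model-theoretic semantics of disjunctive rules there exists a finite forward-chaining derivation tree in which every maximal branch, starting from $\CalD$ and applying rules in $\CalR$, produces a homomorphic image of some disjunct $Q_i\in\CalQ$. I would then induct on the total number of rule applications in this tree to construct a matching rewriting. Each forward application of an existential rule corresponds, via the inverse mgu, to a backward rewriting step producing a new \CQ{} in $\CalQ'$; each forward application of a disjunctive rule is inverted by a sequence of general rewriting steps of Definition~\ref{drewrite}, one per disjunct used along the different branches. These steps progressively remove disjuncts from the head (producing intermediate rules with strictly smaller $\size{H}$) until the rule is exhausted and the contribution is absorbed as a conjunctive query in~$\CalQ'$.

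The main obstacle will be the completeness direction in the disjunctive case: for existential rules the backward--forward correspondence is standard (cf.~\cite{thore}), but for disjunctive rules a single forward application splits the derivation into several branches that must be recombined on the rewriting side. The care needed is to show that processing one disjunct at a time via Definition~\ref{drewrite} eventually accounts for \emph{every} branch of the forward proof tree. I would follow the construction of~\cite{semantic} to manage this bookkeeping: the invariant is that after inverting the first $j$ rule applications of the derivation, there is a rewriting $\left\langle\CalR',\CalQ'\right\rangle$ whose rules correspond to partially-consumed disjunctive rules of the derivation and whose queries correspond to the branches already fully inverted, so that termination of the derivation at depth~$k$ yields the required $Q_i\in\CalQ'$ with $\CalD\models Q_i$. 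Note that this argument establishes existence of a finite rewriting for the particular $\CalD$ and $\CalQ$ at hand; the orthogonal question of whether a single rewriting suffices for all $\CalD$ (the \fus{} property) is the subject of the subsequent sections.
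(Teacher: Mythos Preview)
Your approach differs substantially from the paper's. The paper does not argue by induction at all: it simply observes that each $k$-step rewriting corresponds to a \emph{constraint derivation} in the sense of~\cite{semantic} (a resolution derivation in which one premise always has only negative literals), and then invokes the soundness and completeness of constraint derivations already established there. The entire proof is a two-line reduction to a prior result. What you propose instead is to reprove that result from scratch: a direct induction on rewriting depth for soundness, and an inversion of a forward-chaining proof tree for completeness. This is a legitimate alternative, and it has the merit of being more self-contained, but it is considerably longer and duplicates work already done in the cited source.

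Two places in your sketch would need tightening if you carried it out. First, in the soundness induction your stated auxiliary fact ``$\CalR',\CalD\models\CalR'$'' is vacuous; the invariant you actually need is that every rule and every (negated) query added along the way is a logical consequence of $\CalR\cup\neg\CalQ$, so that $\CalR\cup\neg\CalQ$ and $\CalR'\cup\neg\CalQ'$ remain equivalent as theories. Without this, the step where the rewriting uses a rule $r\in\CalR'\setminus\CalR$ does not go through. Second, your completeness direction presupposes a finite forward-chaining (disjunctive chase) derivation tree whenever $\CalR,\CalD\models\CalQ$; this is true but is itself a nontrivial completeness theorem for disjunctive existential rules, and you would either have to prove it or cite it --- at which point you are close to the paper's own strategy of delegating the heavy lifting to~\cite{semantic}.
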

\begin{proof}
The $k$-step rewriting of $\left\langle \CalR,\CalQ \right\rangle$ is based on a constraint derivation as defined in \cite{semantic}, i.e., resolution derivations that always use a clause with all its literals negated. Moreover, such a rewriting can be mapped to a constraint derivation.
Considering that constraint derivations are sound and complete \cite{semantic}, this theorem also holds. 
\end{proof}

\begin{algorithm}  \centering
\begin{lstlisting}[frame=single,mathescape,style=mystyle]
function rewrite$_k(\Rules,\CalQ)$
  do
    $\CalR_{old} := \CalR$
    $\CalQ_{old} := \CalQ$
    $\CalQ :=$ rewrite$_k^\exists(\existential{\CalR}, \CalQ)$
    $\Rules :=$ rewrite$^\vee(\Rules, \CalQ)$
  while $(\CalQ\neq\CalQ_{old}$ or $\CalR\neq\CalR_{old})$
  return $\CalQ$  
end function
\end{lstlisting}
\caption{Function to rewrite \UCQ{}s with respect to existential rules and disjunctive existential rules.}
\label{rwgen}
\end{algorithm}

Algorithm~\ref{rwgen} presents the function \texttt{rewrite}$_k$/2, which computes the rewritings of $\left\langle \CalR,\CalQ \right\rangle$, for a given a set of rules $\CalR$ and a \UCQ{} $Q$, and yields the corresponding \UCQ{}-rewriting component.
The algorithm alternates between computing the rewri\-tings of \CQ{}s using existential rules (function {\texttt{rewrite}$_k^{\exists}$/2}) and computing the rewritings using disjunctive existential rules (function \texttt{rewrite}$^\vee$/2). 
New \CQ{}s are used to produce more rules, and new existential rules are used to produce more \CQ{}s until a fixed point is reached, i.e., until no new rule or conjunctive query is produced.

All \CQ{}s generated by Algorithm~\ref{rwgen} are computed according to Definition \ref{rwdef}; this ensures the correctness of the rewritings produced, i.e., every \CQ{} that is generated is a \CQ{}-rewriting of the input query with respect to the input sets of rules and constraints.

A detailed description of the rewriting function for disjunctive existential rules (\texttt{rewrite}$^\vee$/2) can be found in \cite{semantic}. It generates all the possible rules using an input \UCQ{} rewriting. Due to the fact that new rules have less disjunctive components in the head, the output is always finite. Therefore, the completeness of the result of Algorithm \ref{rwgen} relies totally on the completeness of function {\texttt{rewrite}$_k^{\exists}$/2}. 

The function {\texttt{rewrite}$_k^{\exists}$/2} represents depth controlled version of a complete rewriting algorithm proposed on \cite{thore}. It implements a breath-first expansion process where each iteration expands a new level of conjunctive queries. The parameter $k$ that allows us to control how many levels of \CQ{}s will be expanded and ensures termination of each individual call for $k\neq \infty$. However, the loop in Algorithm \ref{rwgen} will keep on calling the function as long as new \CQ{}s are generated, without affecting the completeness of the whole rewriting process. The function yields only the most general \CQ{}-rewritings. 
If there is a finite and complete \UCQ{}-rewriting of the input \UCQ{}, the function \texttt{rewrite}$_k^\exists{}$/2 will find it within a finite number of calls. 

\subsection{Existing Rewritable Fragments}\label{fragments}

The termination of Algorithm \ref{rwgen} is studied in \cite{semantic}. The algorithm stops, if there is a finite and complete \UCQ{}-rewriting of the input query with respect to the rules.
\begin{theorem}
\label{finiterw-stops}
Let $\Rules$ be a set of rules and $\CalQ$ a \UCQ{}. If a \UCQ{} $\CalQ$ has a finite and complete \UCQ{}-rewriting  with respect to $\Rules$, then Algorithm~\ref{rwgen} stops for any finite value  of $k$. 
\end{theorem}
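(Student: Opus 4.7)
The plan is to prove termination of Algorithm~\ref{rwgen} by bounding how many times the outer do-while loop can strictly enlarge either the query set $\CalQ$ or the rule set $\CalR$. The hypothesis of a finite and complete \UCQ{}-rewriting supplies the bound on the query side, while the disjunct-reducing nature of \texttt{rewrite}$^\vee$ supplies the bound on the rule side. The two inner calls terminate individually, so everything reduces to controlling the outer iterations.

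First, I would argue that the hypothesis implies the existence of only finitely many equivalence classes (up to variable renaming) of most general \CQ{}-rewritings of $\CalQ$ with respect to $\CalR$. Any \CQ{}-rewriting is an instance of some most general one, and a finite and complete \UCQ{}-rewriting must represent every such equivalence class. Since \texttt{rewrite}$_k^\exists$ is described as yielding only most general \CQ{}-rewritings, each iteration can add at most one fresh equivalence class to $\CalQ$, so $\CalQ$ can grow only finitely often.

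Second, I would bound the growth of $\CalR$. By Definition~\ref{drewrite}, every rule added by \texttt{rewrite}$^\vee$ has strictly fewer head disjuncts than its parent rule. Hence, starting from the initial finite set of disjunctive rules, the ancestry tree of rules produced by repeated application of \texttt{rewrite}$^\vee$ has depth bounded by the maximum number of disjuncts in any initial rule, and branching bounded at each node by the number of (CQ, disjunct-subset) unification pairs available, itself bounded by $\size{\CalQ}$. Once $\CalQ$ stabilises, each ancestry tree is finite, and therefore the set of rules producible by \texttt{rewrite}$^\vee$ is finite. Combined with step one, this yields a global bound on $\size{\CalR}$.

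Finally, each iteration of the do-while loop either leaves both $\CalQ$ and $\CalR$ unchanged, in which case the guard is false and the loop exits, or strictly enlarges at least one of them by a new equivalence class or rule. The two finiteness bounds above then force termination in finitely many iterations. The inner call \texttt{rewrite}$_k^\exists$ terminates for finite $k$ by its depth-$k$ breadth-first structure, and each call to \texttt{rewrite}$^\vee$ terminates because its output per call is finite, as noted in the description of the algorithm. The main obstacle will be handling the feedback between the two functions: new rules produced by \texttt{rewrite}$^\vee$ can in principle enable new \CQ{}-rewritings. To close this, I would observe that any derived rule is a semantic consequence of the original rule set, so the set of most general \CQ{}-rewritings with respect to any intermediate extension of $\CalR$ coincides with that with respect to the original $\CalR$; this keeps the bound from step one stable under enlargement of $\CalR$.
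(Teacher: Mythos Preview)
The paper itself does not prove this theorem; it simply writes ``For the proof, we refer the reader to \cite{semantic}.'' So there is no in-paper argument to compare against. Your outline is consistent with the informal remarks the paper makes around the algorithm (finiteness of each call to \texttt{rewrite}$^\vee$, the fact that \texttt{rewrite}$_k^\exists$ keeps only most general \CQ{}s, and that derived rules are semantic consequences of the original ones), and the overall strategy---bound the \CQ{} side via the finite complete rewriting, bound the rule side via the strictly decreasing number of head disjuncts, and close the feedback loop by observing that derived rules do not enlarge the semantic set of \CQ{}-rewritings---is sound.

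Two small points you should tighten. First, ``each iteration can add at most one fresh equivalence class to $\CalQ$'' is not right: a single call to \texttt{rewrite}$_k^\exists$ performs a $k$-level breadth-first expansion and may add many \CQ{}s at once. What you actually need (and what your argument really uses) is that the \emph{total} number of equivalence classes of most general \CQ{}-rewritings is finite, so $\CalQ$ can be strictly enlarged only finitely many times regardless of how many are added per iteration. Second, to conclude that $\CalQ$ eventually stabilises you implicitly need that the globally most general \CQ{}-rewritings are actually reached by the procedure; otherwise an infinite antichain of non-maximal \CQ{}-rewritings sitting below the $q_i$'s would not be pruned. The paper supplies exactly this: ``If there is a finite and complete \UCQ{}-rewriting of the input \UCQ{}, the function \texttt{rewrite}$_k^\exists$/2 will find it within a finite number of calls.'' Make that dependence explicit, and your proof goes through.
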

For the proof, we refer the reader to \cite{semantic}.

 The problem of knowing if there exists a finite \UCQ{}-rewriting for any \UCQ{} with respect to an arbitrary set of existential rules is undecidable \cite{walking}. A set of existential rules that ensures the existence of a finite \UCQ{}-rewriting for any \UCQ{} is called a finite unification set (\fus{}) \cite{classes}. We extend the concept of \fus{} to disjunctive existential rules.
 
 There are some classes of existential rules that have the \fus{} property:
\begin{enumerate}
    \item \emph{Linear} existential rules \cite{classes}: existential rules with one atom in the body.
    \item \emph{Disconnected} existential rules \cite{disc}: existential rules that do not share variables between the body and the head.
    \item \emph{Domain restricted} rules \cite{classes}: existential rules that each atom in the head contains none or all of the variables in the body. 
    \item \emph{Acyclic graph of rule dependencies} (\emph{aGRD}) \cite{aGRD}: existential rules that do not contain cycles in the \emph{graph of rule dependencies}.
    \item \emph{Sticky} rules \cite{sticky}: Each marked variable occurs at most once in a rule body. The marked variable set is built from a rule set using the following marking procedure: (i) for each rule $r_i$ and for each variable $v$ occurring in the body of $r_i$, if $v$ does not occur in all atoms of the head of $r_i$, mark (each occurrence of) $v$ in the body of $r_i$; (ii) apply until a fixpoint is reached: for each rule $r_i$, if a marked variable $v$ appears at position $p[k]$ in the body of $r_i$, then for each rule $r_j$ (including $i = j$) and for each variable $x$ appearing at position $p[k]$ in the head of $r_j$, mark each occurrence of $x$ in the body of $r_j$. 
    %\todo{check definition.}
\end{enumerate}

Extending the concept of linear rules to disjunctive existential rules is not sufficient to ensure the \fus{} property because it may only ensure the existence of a finite \UCQ{}-rewriting if the input query is an atomic query.

\begin{theorem}
\label{lin-stops}
Let $\Rules$ be a set of rules and $\CalQ$ a \UCQ{}. If $\CalQ$ contains only atomic queries and $\Rules$ only linear rules, then Algorithm \ref{rwgen} stops for any value of $k$.
\end{theorem}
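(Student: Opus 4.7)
The plan is to establish two invariants preserved by Algorithm~\ref{rwgen} under the hypothesis that $\CalQ$ initially contains only atomic queries and $\Rules$ only linear rules, and then conclude termination from a finiteness argument based on the lemmas of Subsection~\ref{setformulas}. The high-level intuition is that atomicity of queries and linearity of rules are self-perpetuating under both kinds of rewriting step, while every disjunctive rewriting step strictly decreases the number of disjuncts in the rule it produces.

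First, I would prove by induction on the execution trace of Algorithm~\ref{rwgen} that every conjunctive query ever added to $\CalQ$ remains atomic. If $Q$ is atomic and $r = B \to H$ is a linear existential rule, then $B$ consists of a single atom and, in Definition~\ref{rewrite}, the only nontrivial choice of unifying subset $Q' \subseteq Q$ is $Q' = Q$ itself. Hence $Q \setminus Q' = \emptyset$ and the rewriting $(B \cup (Q \setminus Q'))\theta = B\theta$ is again a single atom. The dual invariant I would prove is that every rule added to $\CalR$ by rewrite$^{\vee}$ is a linear rule with strictly fewer disjuncts in its head than the rule it was produced from. In Definition~\ref{drewrite} applied to an atomic $Q$, the same argument forces $Q \setminus Q' = \emptyset$, so the body of the new rule is the single atom $B\theta$, while its head is $(H \setminus H')\theta$ with $\size{H \setminus H'} < \size{H}$.

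With both invariants in hand, I would combine them with the finiteness results of Subsection~\ref{setformulas}. For queries, Lemma~\ref{finite_cardk_edges} specialised to connected width one implies that there are only finitely many equivalence classes of atomic queries over a finite signature, and since rewrite$_k^\exists$ retains only the most general \CQ{}-rewritings, $\CalQ$ can grow only finitely many times up to equivalence. For rules, a simple induction on the maximum number of disjuncts handles the disjunctive rewriting: each new rule is linear with a strictly smaller head, and Lemma~\ref{finite_cardk_edges} again bounds the number of linear rules of any fixed head size. Hence $\CalR$ also grows only finitely many times, and the do-while loop of Algorithm~\ref{rwgen} must eventually reach a fixpoint with $\CalQ = \CalQ_{old}$ and $\CalR = \CalR_{old}$ and exit, for any value of $k$.

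The main obstacle will be the careful treatment of the corner case $H = H'$, where the disjunctive rewriting step yields a negative constraint $B\theta \to \bot$ that is reinserted into $\CalQ$ as a query through the negation transformation of Definition~\ref{rwdef}. Verifying that this branch also preserves the first invariant---the resulting query involves only the single atom from $B\theta$ because $B$ was linear---is the technical point that must be checked, but once it is handled the termination argument above goes through uniformly for every finite~$k$ and also for $k = \infty$.
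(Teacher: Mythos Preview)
The paper states Theorem~\ref{lin-stops} without proof; it is presented as a known result, with the later text pointing to ``Theorem~3.9 on~\cite{semantic}'' for the linear case. Your argument is correct and supplies precisely the details one would expect: the two invariants---atomic queries stay atomic, and every rule produced by \texttt{rewrite}$^\vee$ is again linear with strictly fewer head disjuncts---are the heart of the matter, and your justification of both via the forced choice $Q'=Q$ in Definitions~\ref{rewrite} and~\ref{drewrite} is clean. The corner case $H=H'$ is also handled correctly, since the resulting negative constraint has body $B\theta$, a single atom.

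One small point worth tightening: your appeal to Lemma~\ref{finite_cardk_edges} to bound the number of \emph{rules} is slightly informal, because that lemma is stated for \CSF{}s of atoms, not for rules whose heads are \DSF{}s. The fix is easy and implicit in what you wrote: every rule ever produced has a one-atom body, at most $D$ head disjuncts (where $D$ is the maximum number of disjuncts among the initial rules), and each head disjunct is a substitution instance of a disjunct already present in some initial rule, hence contains no more atoms than the largest such disjunct. This uniformly bounds the total number of atoms---and therefore variables---in any produced rule, so only finitely many rules arise up to renaming. With that refinement your fixpoint argument goes through for every value of~$k$, including $k=\infty$, and nothing further is needed.
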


If a set $\Rules$ of existential rules is a \fus{} and a set of the new existential rules generated by the function \texttt{rewrite}$^\vee$/2 is also a \fus{}, combining them could yield a new set of existential rules that is not a \fus{} \cite{walking}. Therefore, we need stronger conditions to ensure that we always call {\texttt{rewrite}$_k^{\exists}$/2} with a set of existential rules that is a \fus{}. In general, two \fus{} $\{\Rules_1,\Rules_2\}$ can be combined into a \fus{} if none of the rules of one set depends on the rules of the other set.

In \texttt{rewrite}$^\vee$/2, even if the resulting set of existential rules $\Rules$ is a \fus{}, the process of generating new rules could potentially continue forever after new \CQ{}s are generated. Therefore, we need ways to ensure that the total number of existential rules generated is bounded, i.e., there is a point beyond which the algorithm will not produce new rules.                       

Rules that do not share variables between the head and the body produce rewritings where the introduced body of the rule is not connected to the remaining part of the query.

A rule $B\rightarrow H$ with an empty frontier is a \emph{disconnected} rule, i.e., ${\atomx{vars}{B}\cap \atomx{vars}{H} = \emptyset}$. Disconnected rules can still share constants between the body and the head of the rule and this allows us to express knowledge about specific individuals.

\begin{theorem}
\label{disc-fus}
Let $\Rules_1$ be a \fus{} and $\Rules_2$ a set of disconnected existential rules. The union of both sets $\Rules_1\cup\Rules_2$ is also a \fus{}.
\end{theorem}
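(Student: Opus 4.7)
The plan is to show that every \CQ{}-rewriting of $\CalQ$ under $\Rules_1 \cup \Rules_2$ belongs, up to logical equivalence, to a finite ``library'' of queries constructed using the \fus{} property of $\Rules_1$ applied to a fixed collection of seeds. The crucial insight is that a disconnected rule only grafts variable-disjoint pieces onto a query and never interacts with its existing connected structure.

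First, I would unpack a single $\Rules_2$-step. For $r = B \to H \in \Rules_2$ with $\atomx{vars}{B} \cap \atomx{vars}{H} = \emptyset$, the mgu $\theta$ used in Definition~\ref{rewrite} leaves $\atomx{vars}{B}$ untouched, so the outcome $(B \cup (Q \setminus Q'))\theta$ decomposes as the variable-disjoint union of a fresh copy of $B$ and the substituted residual $(Q \setminus Q')\theta$. The atoms contributed by $B$ therefore form brand-new connected components entirely independent of the rest of the query; this is the structural lever that will allow me to pass to a finite library.

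Second, I would build a finite family $\mathcal{F}$ of connected components. The seed set $\{\CalQ\} \cup \{B_r : r \in \Rules_2\}$ is finite, and because $\Rules_1$ is a \fus{}, each seed admits a finite complete \UCQ{}-rewriting under $\Rules_1$ alone. I would take $\mathcal{F}$ to be the closure, under $\Rules_1$-rewriting of disjoint unions of its own members, of the connected components appearing in these rewritings. Finiteness of $\mathcal{F}$ follows from Lemma~\ref{finite_cardk} once a uniform bound on the connected cardinality of its members is established, obtained by noting that a single $\Rules_1$-step on a disjoint union of $\mathcal{F}$-elements either acts inside one piece or merges a few, in either case yielding components in a \fus{}-rewriting closure over finitely many predicates and constants.

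By induction on the length of the rewriting sequence, I would then prove that every most-general \CQ{}-rewriting of $\CalQ$ using $\Rules_1 \cup \Rules_2$ is equivalent to a disjoint union of elements of $\mathcal{F}$: an $\Rules_2$-step contributes a fresh copy of some $B_r$ whose components lie in $\mathcal{F}$ by construction, while an $\Rules_1$-step preserves $\mathcal{F}$-membership by the closure property. Since disjoint copies of the same connected query collapse under logical equivalence, at most $2^{|\mathcal{F}|}$ distinct equivalence classes of rewritings can arise, proving $\Rules_1 \cup \Rules_2$ is a \fus{}. The main obstacle will be the cross-component $\Rules_1$-step in the induction: formally verifying that every such merge produces an $\mathcal{F}$-component requires iterating the library construction carefully and appealing both to the \fus{} property of $\Rules_1$ applied to the finitely many disjoint unions of current $\mathcal{F}$-elements and to Lemma~\ref{finite_cardk} to bound the growth of connected cardinality along the way.
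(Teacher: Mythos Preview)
The paper does not give its own proof of this theorem; the sentence following the statement simply refers the reader to \cite{walking} for the details. So there is no in-paper argument to compare against, only your attempt.

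Your structural observation is correct and is the heart of the matter: because the frontier of a disconnected rule is empty, a rewriting step with such a rule contributes a fresh, variable-disjoint copy of its body and cannot bind new variables into the surviving part of the query. The gap is in the finiteness of your library $\mathcal{F}$. You define $\mathcal{F}$ as the closure of a finite seed set under ``form a disjoint union of current members, rewrite with $\Rules_1$, extract connected components,'' and then invoke Lemma~\ref{finite_cardk}. But Lemma~\ref{finite_cardk} requires a uniform bound on connected cardinality, and the \fus{} hypothesis on $\Rules_1$ gives only \emph{per-query} finiteness of the rewriting set, not a uniform size bound across all inputs. A cross-component $\Rules_1$-step can merge several $\mathcal{F}$-pieces into one strictly larger connected component; the next closure round then takes disjoint unions involving that larger piece, and nothing in your sketch prevents the maximum component size from growing unboundedly across rounds. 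You correctly flag this as ``the main obstacle,'' but the proposed remedy---iterate the library construction and appeal to Lemma~\ref{finite_cardk} along the way---is circular: you need the cardinality bound to invoke the lemma, and you need the lemma's conclusion to stop iterating.

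One way to close the gap (and closer in spirit to the argument in \cite{walking}) is to avoid the open-ended closure altogether and instead show a normal-form result: every most-general $(\Rules_1\cup\Rules_2)$-rewriting of $\CalQ$ is already subsumed by an $\Rules_1$-rewriting of some query of the shape $\CalQ_0 \cup B_{r_1} \cup \cdots \cup B_{r_m}$, where $\CalQ_0$ ranges over a \emph{fixed} finite set determined by $\CalQ$ and the $B_{r_i}$ are bodies of rules in $\Rules_2$ with duplicates redundant. That keeps the set of inputs fed to the $\Rules_1$-rewriter finite from the outset, so the \fus{} property of $\Rules_1$ is invoked once rather than inside an unbounded fixpoint.
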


Theorem \ref{disc-fus} allows us to extend the \fus{} property to disjunctive existential rules. For a detailed proof check \cite{walking}.

\begin{theorem}
\label{disc-stops}
Let $\Rules$ be a set of rules. If $\existential{\Rules}$ is a \fus{}, and $\disjunctive{\Rules}$ a set of disconnected disjunctive existential rules, then $\Rules{}$ is also a \fus{}.
\end{theorem}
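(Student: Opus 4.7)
The plan is to reduce this theorem to Theorem~\ref{disc-fus} by establishing that every new rule injected into $\Rules$ by \texttt{rewrite}$^\vee$/2, when starting from a disconnected disjunctive rule of $\disjunctive{\Rules}$, is itself a disconnected rule. Once that is in hand, the existential rule set passed to \texttt{rewrite}$_k^\exists$/2 at each iteration of Algorithm~\ref{rwgen} is the union of the original \fus{} $\existential{\Rules}$ with a growing set of disconnected existential rules, and hence remains a \fus{} by Theorem~\ref{disc-fus}; coupled with the per-call finiteness of \texttt{rewrite}$^\vee$/2, this will yield termination of Algorithm~\ref{rwgen} on any input \UCQ{}, and therefore the \fus{} property of $\Rules$.

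The key lemma I would prove first is: if $r = B \rightarrow H$ is disconnected and Definition~\ref{drewrite} applies to $r$ with query $Q$, subsets $H' \subseteq H$, $Q' \subseteq Q$, and mgu $\theta$, then the output rule $(B \cup (Q \setminus Q') \rightarrow H \setminus H')\theta$ is also disconnected. I would argue this by tracking variables through $\theta$. Because $r$ is disconnected, its frontier is empty and every variable of $H$ is existential; because $\theta$ unifies only subsets of $H$ with a subset of $Q$, its domain lies inside $\atomx{vars}{H' \cup Q'}$, leaving the variables of $B$ untouched. Condition~1 of Definition~\ref{drewrite} forces any moved variable of $Q \setminus Q'$ to become a frontier variable or a constant, which here means a constant. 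Condition~2 forces the $\theta$-image of any existential variable of $r$ to avoid $\atomx{vars}{Q \setminus Q'}$. Piecing these constraints together shows that no variable can appear in both $\atomx{vars}{B \cup (Q \setminus Q')\theta}$ and $\atomx{vars}{(H \setminus H')\theta}$.

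With this lemma the termination of Algorithm~\ref{rwgen} follows. At every iteration, $\existential{\Rules}$ has been extended only by disconnected existential rules, so by Theorem~\ref{disc-fus} it is still a \fus{}, and \texttt{rewrite}$_k^\exists$/2 returns a finite \UCQ{} of most-general \CQ{}-rewritings for a suitable finite $k$. The subsequent call to \texttt{rewrite}$^\vee$/2 is fed with that finite \UCQ{} and the current finite $\Rules$, strictly reduces disjunct counts, and therefore adds only finitely many new rules, each disconnected by the structural lemma. Since both $\CalQ$ and $\Rules$ thus range over finitely many equivalence classes, the outer loop stabilizes.

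The principal obstacle will be the careful variable bookkeeping in the structural claim, particularly the need to exclude indirect leakage caused by $\theta$ mapping a variable of $Q'$ into both the new body and the new head. The two side conditions of Definition~\ref{drewrite} are precisely what block this, but the case analysis through mgu images must be done with care when the disjuncts of $H$ share variables, so that $\theta$'s action on $H'$ can propagate into $H \setminus H'$ via a head variable that survives in $H \setminus H'$.
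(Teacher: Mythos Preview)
Your plan is essentially the same as the paper's: both argue that $\Rules$ is a \fus{} by showing that Algorithm~\ref{rwgen} terminates on every input, and the paper simply defers the details to~\cite{semantic}. Your key structural lemma---that a general rewriting step applied to a disconnected rule yields a disconnected rule, so that by Theorem~\ref{disc-fus} the evolving $\existential{\Rules}$ remains a \fus{}---is exactly the ingredient that proof needs, and your variable-tracking sketch via the two side conditions of Definition~\ref{drewrite} is correct (with the natural reading that Condition~1 only constrains variables actually moved by~$\theta$, as in Definition~\ref{rewrite}).

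The one place where your outline is thinner than it should be is the final global termination step: from ``each call to \texttt{rewrite}$_k^\exists$ and \texttt{rewrite}$^\vee$ is individually finite'' you jump directly to ``$\CalQ$ and $\Rules$ range over finitely many equivalence classes, so the outer loop stabilises.'' That is the heart of the matter and it is not automatic---bodies of generated rules absorb fragments of queries, and queries in turn absorb those bodies, so in principle there is a feedback loop. The actual bound comes from observing that, because every generated rule is disconnected, the connected components appearing in any generated \CQ{} are drawn (up to renaming and constant-instantiation) from the finite stock supplied by the original query, the bodies of the original rules, and their closure under the original \fus{} $\existential{\Rules}$; Theorem~\ref{disc-fus} (whose proof in~\cite{walking} relies on exactly this connected-component argument) is what makes that stock finite. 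You should make that dependence explicit rather than asserting finiteness outright.
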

\begin{proof}
The theorem can be proven by ensuring that Algorithm \ref{rewrite} will always stop. The reader is referred to \cite{semantic} for a detailed proof.
\end{proof}

\subsection{Expanding the Existing Fragments}\label{expand}

Domain restricted (\dr{}) rules  \cite{classes} are existential rules where all the atoms in the head contain none or all of the variables in the body of the rule. However, if we consider rules where the bodies can have more than one connected component, then the definition of \dr{} rules can be generalized.

\begin{definition}[Connected domain restricted rule]
\label{cdrdef}

A rule is called \emph{connected domain restricted} (\cdr{}) rule if for every connected component $C$ in the body of the rule and for every atom $h$ in the head, $h$ contains none or all the variables of $C$. 

%A \cdr{} rule is a \kcdr{k} rule, if the atoms in the head of the rule does not contain variables from more than $k$ different connected components of the body of the rule.
\end{definition}

\begin{example}[Common ancestor and six degrees of separation rules]
\label{ex:cdr}
In biology and genealogy, the \emph{most recent common ancestor} (MRCA), \emph{last common ancestor} (LCA), or \emph{concestor} of a set of organisms is the most recent individual from which all the organisms of the set are descended. We could express a simpler rule stating that for every two organisms there exists a common ancestor:
\begin{equation*}
\atomx{organism}{X}, \atomx{organism}{Y} \rightarrow \atomx{organism}{Z}, \atomx{ancestor}{Z,X}, \atomx{ancestor}{Z,Y}
\end{equation*}
The rule is obviously not domain restricted but it is connected domain restricted.

Another example of \cdr{} rule that is not a \dr{} is the \emph{six degrees of separation} rule. It describes the idea that all people are six, or fewer, social connections away from each other.
\begin{equation*}
\begin{array}{l}
     \atomx{person}{X}, \atomx{person}{Y} \rightarrow \atomx{knows}{X,X_1}, \atomx{knows}{X_1,X_2}, \atomx{knows}{X_2,X_3}, \\
     \phantom{\atomx{person}{X}, \atomx{person}{Y} \rightarrow} \atomx{knows}{X_3,X_4}, \atomx{knows}{X_4,X_5}, \atomx{knows}{X_5,Y}   \\
\end{array}
\end{equation*}

In the example rules we assume that the predicate \pred{ancestor}{2} is irreflexive and antisymmetric, and \pred{knows}{2} is reflexive and symmetric.

\end{example}

Atoms in the head of a \cdr{} rule $r$ contain all the variables of some (possibly none) connected components in the body of the rule. We can be sure that the rewritings of a \CQ{} $q$ with respect to $r$ will not introduce new variables that are connected to the variables in the part of $q$ that is not modified by the rewriting. Some new variables might be introduced but they will be in isolated connected components. Hence, the connected cardinality of the rewritings with respect to \cdr{} rules is not increasing. Consequently, the class of \cdr{} rules also has the \fus{} property.

\begin{theorem}
\label{cdrfus}
A set of \cdr{} existential rules is a \fus{}.    
\end{theorem}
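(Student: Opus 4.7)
The plan is to bound the connected cardinality of every \CQ{} appearing in the rewriting process and then invoke Lemma~\ref{finite_cardk} to obtain finitely many equivalence classes, which is exactly the \fus{} property. Concretely, I would show that for any \CQ{} $Q$ obtained by a sequence of rewriting steps from the input \UCQ{} $\CalQ$ using rules in $\CalR$, $\ccard{Q}$ is bounded by a constant depending only on $\ccard{\CalQ}$ and the maximum connected cardinality $c_\CalR$ of rule bodies in $\CalR$.

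The core of the argument is a single-step bound. Given a rewriting step $(B \cup (Q \setminus Q'))\theta$ using a \cdr{} rule $r = B \rightarrow H$, I would partition $B$ into its connected components and exploit the \cdr{} dichotomy: each component's variables are either \emph{all} in the head (a frontier component) or \emph{none} in the head (a body-only component). Body-only components contribute to the rewriting as isolated connected pieces with fresh variables, so their cardinality is at most $c_\CalR$. Frontier components have their variables mapped by $\theta$ into terms of $Q'\theta$, and by conditions~(1) and~(2) of Definition~\ref{rewrite}, those terms are existing variables of $(Q \setminus Q')$, frontier variables of $r$, or constants; in particular, no fresh body-only variable ever bridges from $B\theta$ into the kept part of the query.

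A straightforward induction on the rewriting derivation then yields the uniform bound on the connected cardinality, and Lemma~\ref{finite_cardk} concludes that there are finitely many inequivalent \CQ{}-rewritings. Hence Algorithm~\ref{rwgen} terminates and produces a finite complete \UCQ{}-rewriting, proving that $\CalR$ is a \fus{}.

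The main obstacle I anticipate lies in controlling the effect of $\theta$ on the interface variables shared between $Q'$ and $Q \setminus Q'$. The mgu can collapse several interface variables onto a single frontier variable of $r$, thereby merging previously disconnected components of $(Q \setminus Q')$ in a single step and raising the per-component cardinality beyond $\ccard{Q}$. The careful point is to show that such merging is bounded by the rule's structural parameters (essentially the number of head atoms sharing a frontier variable) and does not amplify unboundedly across repeated rewriting steps; this requires either an amortized analysis pitting the merging cost against the strict decrease in distinct variables caused by each identification, or an explicit inductive invariant tying together the total distinct-variable count with the per-component cardinality so that the uniform bound across all rewritings still holds.
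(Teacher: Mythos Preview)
Your plan coincides with the paper's own argument: split the rule body into frontier versus body-only connected components, observe that frontier pieces introduce no new variables (their variables already sit inside a single head atom, hence inside a single atom of $q$) while body-only pieces arrive as isolated components of size at most $c_\CalR$, deduce a bound on connected cardinality, and invoke Lemma~\ref{finite_cardk}. The obstacle you anticipate---mgu-induced merging of previously separate components of $Q$---is exactly the point the paper's proof glosses over: the paper bounds only each newly introduced piece $C'_i$ (its case~(ii)) and then asserts the global bound on $\ccard{q'}$ without further argument, so your proposal is in fact more careful than the paper here, not less.
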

\begin{proof}
A \UCQ{}-rewriting $q'$ that is generated using a \cdr{} rule $r$ and a \CQ{} $q$ has new connected components $C'_i$ that are either (i) not connected to the rest of the query or (ii) that all their variables were already present in an atom of $q'$. Therefore, the only new variables (w.r.t. the variables in $q$) that are introduced in the rewritings are part of disconnected components that come from the body of the set of rules (case i). For case (ii), we can ensure that in $q$ there was a connected component $C_j$ that had an atom with all the variables in the newly introduced connected component $C'_i$, thus $\ccard{C'_i} \leq \ccard{C_j}$. 
We can then ensure that the generated \UCQ{}-rewritings have a
bounded connected cardinality. Therefore, \cdr{} rules can only produce a finite number \UCQ{}-rewritings (Lemma \ref{finite_cardk}). 
\end{proof}

Definition \ref{cdrdef} also applies to disjunctive existential rules. However, a rule generated by a disjunctive rewriting step involving a \cdr{} rule might not be a \cdr{} rule. Therefore, the \fus{} property cannot be extended to connected domain restricted disjunctive rules.

\begin{example}
\label{ex:cdrrwnotcdr}
Consider the rule $\atomx{a}{X}, \atomx{b}{Y} \rightarrow [r(X,Y), \atomx{c}{X}, \atomx{c}{Y}]$ and the \CQ{} $\atomx{r}{X,Y},\atomx{s}{X,Y}.$ They both generate a new disjunctive rule $\atomx{a}{X}, \atomx{b}{Y}, \atomx{s}{X,Y} \rightarrow [\atomx{c}{X}, \atomx{c}{Y}]$ that is not a \cdr{} rule. If another \CQ{} $\atomx{c}{X},\atomx{s}{X,Z}$ is used instead, then a disjunctive rule that is not a \cdr{} rule is again generated, i.e., $\atomx{a}{X}, \atomx{b}{Y}, \atomx{s}{X,Z} \rightarrow [\atomx{r}{X,Y}, \atomx{c}{Y}]$.
\end{example}

We use a similar approach to define a new rule class based on linear rules.

\begin{definition}[Connected linear rule] 
\label{clrdef}
A rule is called \emph{connected linear} rule (\clr{}) if every atom in the head either does not contain variables from the body or contains variables from only one connected component in the body and this connected component has only one atom.

\end{definition}

Both rules of Example \ref{ex:cdr} are also connected linear rules. 

\begin{example}
\label{ex:conlin}
The following rule is not a \cdr{} but it is clearly a connected linear rule. 
\begin{equation*}
\atomx{graduated}{X,Z}, \atomx{graduated}{Y,W} \rightarrow \atomx{exam}{V}, \atomx{passed}{X,V}, \atomx{passed}{Y,V}
\end{equation*}

\end{example}

\begin{theorem}
\label{clrfus}
A set of connected linear existential rules is a \fus{}.
\end{theorem}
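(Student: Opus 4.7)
My plan is to mirror the proof of Theorem~\ref{cdrfus}, but working with the \emph{connected width} of rewritings and invoking Lemma~\ref{finite_cardk_edges} in place of Lemma~\ref{finite_cardk}. The reason the cardinality argument used for \cdr{} rules does not transfer verbatim is that a \clr{} rule may carry non-frontier variables inside its single-atom frontier components, so the activated body atoms can introduce fresh variables into the mixed component. Fortunately, they each contribute at most one atom, which is the right condition for a width bound rather than a cardinality bound.

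First I would dissect a single rewriting step. Let $r=B\rightarrow H$ be a \clr{} rule, $H'\subseteq H$, $Q'\subseteq q$, and $\theta$ the mgu with $H'\theta=Q'\theta$; the rewriting is $q'=(B\cup(q\setminus Q'))\theta$. By Definition~\ref{clrdef}, every head atom is localized to at most one single-atom connected component of $B$. I would partition the body into \emph{activated} components (single atoms whose frontier variables appear in some $h\in H'$, hence are substituted by $\theta$) and \emph{non-activated} components (whose variables remain rule-local). Because the frontier variables of distinct body components are disjoint in $B$, the non-activated components stay disconnected from $(q\setminus Q')\theta$ in $q'$, and their total width is bounded by $|B|$, a constant of $r$. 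The activated components contribute at most one atom each, and there are at most as many as body components of $r$.

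Next I would classify the connected components of $q'$: each is either (i) a non-activated body component of width bounded by $r$, or (ii) a mixed component in which the newly-introduced atoms are precisely the activated single body atoms attached via shared frontier terms to a sub-formula of $(q\setminus Q')\theta$. For case (ii), the number of body atoms added to a single mixed component is at most the number of activated body components, a constant depending only on $\Rules$; the remaining atoms originated in $q$. Combining this with the fact that $\texttt{rewrite}_k^{\exists}$ yields only most-general rewritings, I would argue inductively that repeated \clr{} rewritings cannot drive the connected width beyond a bound $W$ that depends only on the initial connected width of the input \UCQ{} and the maximum body width of rules in $\Rules$: duplicated single atoms that share the same frontier-substitution pattern fold away under homomorphism, so no mixed component accumulates unboundedly many essentially-distinct atoms.

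With the connected width uniformly bounded by $W$, Lemma~\ref{finite_cardk_edges} guarantees finitely many equivalence classes of rewritings using the finite set of predicates and constants present in $\Rules$ and the input query, which is precisely the \fus{} property. The main obstacle, and where I expect the real work to lie, is exactly the inductive width bound across many rewriting steps: one has to rule out the scenario (illustrated by iterating a rule that ``reactivates'' previously-added atoms) in which fresh single atoms keep being glued around the same query variable. I would handle this by exploiting most-generality together with the observation that a \clr{}-rewriting of a connected piece of $q$ can always be expressed as a sequence of elementary linear-style replacements plus disconnected ballast, so that equivalence-class-wise the process behaves like the linear case of Theorem~\ref{lin-stops}.
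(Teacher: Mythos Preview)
Your overall strategy---bound the connected width of all rewritings and then invoke Lemma~\ref{finite_cardk_edges}---is exactly the paper's. The divergence is in the per-step analysis. You settle for ``at most a rule-dependent constant number of body atoms are glued onto a mixed component per step,'' and then correctly observe that this leaves the iteration problem open; you flag that as ``where the real work lies.'' Under your weak per-step bound that obstacle is genuine, and the resolution you sketch (``duplicated single atoms fold away under homomorphism,'' ``behaves like the linear case of Theorem~\ref{lin-stops}'') is not an argument---it is a hope. As written, your proposal has a gap precisely at the point you yourself identify.

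The paper sidesteps the obstacle entirely by proving the stronger claim that connected width is \emph{non-increasing} in every (most-general) rewriting step. The reason is simple: by Definition~\ref{clrdef} each head atom is tied to at most one single-atom body component, so each atom removed from $Q'$ is replaced by at most one body atom that remains connected to $(q\setminus Q')\theta$; every other body atom lands in a separate, disconnected component. The only way a single removed query atom can activate several body atoms is to unify it simultaneously with several head atoms linked to distinct body components (e.g.\ $H'=\{p(X),p(Y)\}$, $Q'=\{p(Z)\}$ for the rule $a(X),b(Y)\to p(X),p(Y)$), and any such rewriting is subsumed by the one that uses a single head atom---hence it is pruned by the most-general filter that you yourself rely on. Once you have per-step monotonicity of connected width, Lemma~\ref{finite_cardk_edges} finishes the proof immediately, with none of the inductive bookkeeping you anticipated.
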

\begin{proof} 
A \UCQ{}-rewriting $q'$ that is generated using a \clr{} rule $r$ and a \CQ{} $q$ has new atoms $a'_i$ that are either (i) not connected to the rest of the query or (ii) only connected to variables which were already present in an atom of $q$. 

A \clr{} prevents an atom in the head of the rule from containing variables from two different atoms (or connected components) in the body. Therefore, the rewritten atoms in $q$ are never replaced by more than one corresponding atom that is connected to the rest of the query. This ensures that the newly formed connected component in $q'$ will not have more atoms than those  existing in $q$. The rewriting $q'$ can have other atoms that are not a ``replacement'' of atoms in $q$ but those atoms are not connected to the atoms that existed in $q$. They come from other connected components that were present in the body of the rules. Thus, the \UCQ{}-rewritings which are introduced using connected linear rules have a bound on the number of atoms in their connected components. Therefore, connected linear rules may only produce a finite number \UCQ{}-rewritings (Lemma \ref{finite_cardk_edges}).
\end{proof}

The definition \ref{clrdef} may also be extended to disjunctive existential rules. However, a rule generated by a disjunctive rewriting step involving a \clr{} rule might not be a \clr{} rule. Therefore, the \fus{} property cannot be extended to connected linear disjunctive rules.

\begin{example}
\label{ex:clrrwnotclr}
Consider a connected linear rule $\atomx{a}{X}, \atomx{b}{Y} \rightarrow [r(X,W), \atomx{c}{X}, \atomx{c}{Y}]$ and a \CQ{} $\atomx{c}{X},\atomx{s}{X,Z}$. We can generate new disjunctive rule (i.e., ${\atomx{a}{X}, \atomx{b}{Y}, \atomx{s}{X,Z} \rightarrow [\atomx{r}{X,W}, \atomx{c}{Y}]}$) that is not a connected linear rule.
\end{example}

A disjunctive existential rule can be restricted to have disconnected disjoints. 

\begin{definition}[disconnected disjunction]\label{discdisj}
    A disjunctive existential rule has \emph{disconnected disjunction} if the disjoint components in the head of the rule never share variables with the same connected component in the body of the rule. A disjunctive existential rule that has disconnected disjunction is called a \emph{D-disjunctive existential rule} or \DDER{}.
\end{definition}

\begin{theorem}
     \label{dderrw}
The rewritings of \DDER{}s are also \DDER{}s.    
\end{theorem}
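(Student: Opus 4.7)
The plan is to analyze the rewriting step of Definition \ref{drewrite} directly and show that disconnected disjunction is preserved.

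First I would establish a convenient structural reformulation of the DDER property: in a DDER $r = B \rightarrow H$, no frontier variable of $r$ (an element of $\atomx{vars}{B}\cap\atomx{vars}{H}$) can appear in two distinct disjoints of $H$. Indeed any such shared variable would belong to a single connected component of $B$, which would then share variables with two disjoints, contradicting disconnected disjunction. Hence the frontier variables of $r$ partition cleanly across the disjoints of $H$.

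Next I would fix a disjunctive rewriting step producing $r' = (B \cup (Q \setminus Q'))\theta \rightarrow (H \setminus H')\theta$ with at least two remaining disjoints (otherwise the statement about \DDER{}s is vacuous). The substitution $\theta$ is the mgu of atoms drawn from $h'_i \subseteq H_i$ (for $H_i \in H'$) and $Q'$, so its domain lies within the variables of these atoms. Using the partition above, every frontier variable of $r$ that appears in a remaining disjoint $H_j \in H \setminus H'$ is fixed by $\theta$: such a variable is a rule variable (so not in $Q'$), and by the partition it cannot appear in any $h'_i$ without being a frontier shared across two disjoints. By condition 1 of Definition \ref{drewrite}, all variables in $(Q \setminus Q')\theta$ are frontier variables of $r$ or constants; an mgu-range argument shows that the frontier variables so produced must themselves originate in $h'_i$ for some $H_i \in H'$. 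Consequently, connected components of $B$ linked to a remaining disjoint pass into $B\theta$ intact and are not merged with other components by $\theta$, while atoms contributed by $(Q \setminus Q')\theta$ can only bridge components already linked to $H'$-disjoints.

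Combining these facts, any connected component $C'$ of the new body $(B \cup (Q \setminus Q'))\theta$ is either (i) a preserved body component of $B$ linked to at most one remaining disjoint $H_j$, or (ii) a merger of body components all linked to disjoints in $H'$ together with added atoms, in which case it carries no frontier variable of any remaining disjoint. In both cases at most one disjoint $(H_j)\theta$ shares variables with $C'$, establishing DDER for $r'$. The main obstacle will be the bookkeeping for existential variables of $r$ shared between a remaining disjoint $H_j$ and some $H_i \in H'$: such variables may be substituted by $\theta$ and their images could a priori land in $B'$. Here condition 2 of Definition \ref{drewrite} keeps these images outside $\atomx{vars}{Q \setminus Q'}$, and a short case analysis on how the image can enter $B'$ (either through an added atom, which requires the image to be a frontier variable already owned by an $H'$-disjoint, or as the target of a mgu-substituted body variable, which likewise traces back to an $H'$-disjoint) shows that no new frontier variable of $r'$ is produced that is simultaneously shared by two remaining disjoints.
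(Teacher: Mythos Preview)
Your reformulation of the \DDER{} property (no frontier variable of $r$ lies in two distinct disjoints) and the consequent fact that every frontier variable occurring in a remaining disjoint is fixed by $\theta$ are exactly the ingredients the paper's terse proof leaves implicit; up to that point your argument is sound and considerably more careful than the paper's, which simply asserts that the added body atoms ``may possibly merge some connected components $C_i$ \ldots\ in particular, those that were connected to $D_1$'' without analysing $\theta$ at all.

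The final paragraph, however, does not close. Take the \DDER{}
\[
a(X),\, b(Y),\, c(W) \;\rightarrow\; [\,p(X,Z),\; q(Y,Z),\; s(W,Z)\,]
\]
and $Q=Q'=\{p(U,U)\}$. Conditions 1 and 2 of Definition~\ref{drewrite} are vacuous since $Q\setminus Q'=\emptyset$, the mgu sends $Z\mapsto X$, and the rewriting is $a(X),b(Y),c(W)\rightarrow[\,q(Y,X),\,s(W,X)\,]$, in which both remaining disjoints share the body component $\{a(X)\}$. Your case analysis correctly traces $Z\theta=X$ back to the removed disjoint $p$, but that is not enough: the existential $Z$ also sat in \emph{both} remaining disjoints, so its image becomes a new frontier variable of $r'$ common to $q\theta$ and $s\theta$, contradicting your concluding sentence. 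The repair is to assume at the outset that the disjoints are standardised apart on their existential variables (semantically harmless, since $\exists Z\,(D_1\vee D_2)\equiv(\exists Z\,D_1)\vee(\exists Z'\,D_2)$); under that normalisation every variable of each remaining $H_j$ is fixed by $\theta$, the obstacle you flag vanishes, and the argument of your second paragraph already finishes the proof. The paper's own proof tacitly relies on the same normalisation when it writes the head of the rewriting as $[D_2,\dots,D_m]$ rather than $[D_2,\dots,D_m]\theta$.
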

\begin{proof}
    Let $C_1,\ldots,C_n \rightarrow [D_1,\ldots,D_m]$ be a \DDER{} $r_1$. Without loss of generality, we define a rewriting $r_2$ that removes $D_1$ and introduces new atoms $B$ in the body, i.e., $B, C_1,\ldots,C_n \rightarrow [D_2,\ldots,D_m]$. The atoms in $B$ may possibly merge some connected components $C_i$ of the body of the rule. In particular, those that were connected to $D_1$. However, those components cannot be connected to any of the remaining disjoints $[D_2,\ldots,D_m]$ due to the fact that $r_1$ is a \DDER{}. Thus, $r_2$ is also a \DDER{}.
\end{proof}

Using similar reasoning, we can also affirm that \cdr{} (\clr{}) that are also \DDER{}, generate rewritings that are also \cdr{} (\clr{}).

\begin{theorem}
\label{disc-stops}
Let \Rules{} be a \DDER{} that is also a \cdr{} (\clr{}). Then, $\Rules{}$ is also a \fus{}.
\end{theorem}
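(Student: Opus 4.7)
The plan is to show that Algorithm \ref{rwgen} always terminates when its input consists of a UCQ $\CalQ$ and a rule set $\Rules$ that is a DDER and also CDR (resp. CLR). Together with the soundness and completeness of the rewriting process, termination yields a finite and complete UCQ-rewriting, which is exactly the FUS property.

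First, I would establish a structural invariant maintained throughout the execution of Algorithm \ref{rwgen}: every rule that appears in $\CalR$ at any stage, whether a disjunctive existential rule or an existential rule that arose once only one disjunct remained, is still a DDER and still CDR (resp. CLR). For the initial $\Rules$ this holds by hypothesis; for rules produced by \texttt{rewrite}$^\vee$ it is exactly the content of Theorem \ref{dderrw} together with the remark immediately following it. As a consequence, whenever \texttt{rewrite}$_k^\exists$ is called, its argument $\existential{\CalR}$ is a set of CDR (resp. CLR) existential rules, so by Theorem \ref{cdrfus} (resp. \ref{clrfus}) it is a \fus{} and each call terminates.

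Next, I would argue that the connected cardinality (resp. connected width) of every CQ in $\CalQ$ and every body of a rule in $\CalR$ stays uniformly bounded by a constant determined by the initial input. For CQs this follows from the proof of Theorem \ref{cdrfus} (resp. \ref{clrfus}): a CDR (resp. CLR) existential rule cannot introduce a connected component whose cardinality (resp. width) exceeds those already present in the CQ being rewritten. For rule bodies, a new body produced by \texttt{rewrite}$^\vee$ has the form $(B \cup (Q\setminus Q'))\theta$; since the resulting rule is again CDR (resp. CLR) by the invariant, each of its body connected components must have all its variables contained in some head atom, and head atoms are bounded (heads only lose disjuncts during the algorithm), so the same global bound applies. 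With a fixed, finite vocabulary of predicates and constants, Lemma \ref{finite_cardk} (resp. Lemma \ref{finite_cardk_edges}) then implies that $\CalQ$ and the bodies occurring in $\CalR$ can only range over finitely many equivalence classes; since heads only shrink, the same holds for $\CalR$ itself.

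Finally, monotonicity of the two sets together with the finiteness of equivalence classes forces the loop condition $\CalQ \neq \CalQ_{old}$ or $\CalR \neq \CalR_{old}$ to fail after finitely many iterations, so the algorithm terminates and outputs a finite UCQ, which by the soundness and completeness of rewritings is a complete UCQ-rewriting. Since the input UCQ was arbitrary, $\Rules$ is a \fus{}. The main obstacle I expect is the interaction between the two phases in step two: I must verify carefully that when \texttt{rewrite}$^\vee$ merges atoms of a CQ into a rule body via the unifier $\theta$, the DDER property indeed prevents those atoms from bridging previously separate body components associated with distinct disjuncts. This is exactly what ensures that the CDR (resp. CLR) structure is preserved in the new body and hence that the global bound on connected cardinality (resp. width) survives the alternation of rewriting phases.
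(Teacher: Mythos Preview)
Your overall plan matches the paper's: maintain the invariant that every rule stays \DDER{} and \cdr{} (resp.\ \clr{}), bound the connected cardinality (resp.\ width) of all \CQ{}s and rule bodies by a constant $M$ determined by the input, and conclude termination via Lemma~\ref{finite_cardk} (resp.\ Lemma~\ref{finite_cardk_edges}).

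The gap is in your justification for bounding the bodies produced by \texttt{rewrite}$^\vee$. You argue that because the resulting rule is again \cdr{} (resp.\ \clr{}), ``each of its body connected components must have all its variables contained in some head atom''. That inference is false: \cdr{} requires only that each head atom contain \emph{all or none} of a body component's variables, and the ``none'' case may hold for every head atom simultaneously (e.g.\ $a(X),b(Y)\to c(X)$ is both \cdr{} and \clr{}, yet the component $\{b(Y)\}$ meets no head atom). Worse, the body components that actually grow in a disjunctive rewriting step are those that were attached to the \emph{removed} disjuncts; by \DDER{} they are attached to no surviving disjunct, so in the new rule they are precisely the components for which no head atom covers them, and your arity bound yields nothing there.

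The paper instead bounds the new bodies by appealing to the \DDER{} property of the rule \emph{being rewritten} (in the presence of its \cdr{}/\clr{} structure), not to \cdr{}/\clr{} of the output rule. The mechanism is exactly what you flag as the ``main obstacle'' in your final paragraph: \DDER{} prevents $(Q\setminus Q')\theta$ from bridging body components tied to distinct disjuncts, so a merged component in the new body draws its variables only from a connected piece of $Q$ together with body components whose variables were already confined to atoms of the eliminated disjunct. Promote that reasoning to your step~2 and drop the inference from \cdr{}/\clr{} of the new rule.
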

\begin{proof}
We state that the Algorithm \ref{rwgen} cannot generate infinitely many rewritings if the rules are \DDER{} and \cdr{} (\clr{}).

Let $Q$ be a \UCQ{} and $M$ be the maximum cardinality (width) of the bodies in the rules of $\Rules$ and the \CQ{}s in $Q$. Given that all the rules in $\Rules{}$ are \cdr{} (\clr{}), a rewriting step will only produce queries with a cardinality bounded by $M$. Additionally, the cardinality of the bodies of rules produced as rewritings of disjunctive rules in $\disjunctive{\Rules{}}$ will be bounded by $M$ because they are \DDER{}. The newly generated existential rules will have the same \fus{} property of $\existential{\Rules{}}$, i.e., \cdr{} (\clr{}) and this ensures that at every step of the algorithm $\existential{\Rules{}}$ is a \fus{}.

Thus, the rewriting Algorithm \ref{rwgen} will stop due to the fact that it can only produce finitely many rewritings of the initial arguments.
\end{proof}

\section{System Description and Evaluation}\label{eval}

\completo{}\footnote{\url{http://image.ntua.gr/~gardero/completo3.0/}} is a query rewriting system that focuses on answering \NUCQ{}s in the framework of disjunctive existential rules. The system is implemented in java. The first version of \completo{} \cite{Alfonso2017} answers \NCQ{} using a reso\-lution-based approach to eliminate negated atoms. The proposed algorithm is complete only for a restricted type of queries. 

In the second version of the system \cite{Alfonso2018}, only queries with one negated atom are answered by being transformed  into rules. The approach is complete but  termination is guaranteed only when the resulting set of rules is a \fus{}.

The 3rd version of \completo{}\cite{semantic}  implements Algorithm \ref{rwgen} with deterministic one-step rewriting functions and answers queries with answer variables that have an arbitrary number of negated atoms. Algorithm~\ref{rwgen} can be seen as a generalization of both algorithms proposed  in \cite{Alfonso2017, Alfonso2018}. Indeed, queries with one negated atom are transformed into rules, while the rewriting defined for disjunctive rules is similar to what was presented in \cite{Alfonso2017} as constraint resolution.
Furthermore, \completo{} v3 takes advantage of the termination results for  knowledge bases consisting of a \fus{} and  \NUCQ{}s whose frontier is part of the answer variables of the query (Theorem 3.8 on \cite{semantic}), as well as for knowledge bases consisting only of linear elements (Theorem 3.9 on \cite{semantic}). Choosing $k=\infty$ allows the rewriting with respect to existential rules to be performed by an external rewriter if the are no answer variables in the queries. 

We present the latest version of the system called  \ecompleto{}\footnote{\url{https://github.com/gardero/ecompleto}} and it is implemented in Elixir. The system answers queries with answer variables that contain an arbitrary number of negated atoms with respect to disjunctive existential rules. Ontologies are provided in \dlgpp{} format, a proposed extension of \dlgp{}\footnote{\url{https://graphik-team.github.io/graal/papers/datalog+_v2.0_en.pdf}} v2.0 that allows the specification of disjunctive existential rules and negated atoms in queries.

Disjunction in  \dlgpp{} is specified in the head of a rule by writing a list of the disjoints enclosed in squared brackets. The disjoint elements can be a single atom or several atoms enclosed in brackets, e.g.,
\vspace{3pt}
\begin{verbatim}
[disj. rule] [leaf(X), (inner_node(X), edge(X,Y))] :- node(X).
\end{verbatim}
\vspace{3pt}
Negation in queries with negated atoms is specified with the minus symbol before an atom, e.g.,
\vspace{3pt}
\begin{verbatim}
[q neg] ? :- person(X), -marriedTo(X,Y). 
\end{verbatim}
\vspace{3pt}
\subsection{Experiments}

To the best of our knowledge, there is no other system that produces \UCQ{}-rewritings for \NUCQ{}s with universally quantified negation with respect to disjunctive existential rules. Therefore, the experiments were performed in order to assess the performance of \ecompleto{} producing \UCQ{}-rewritings. We used an Intel(R) Core(TM) i5-7300HQ CPU at 2.50 GHz with 24 GB of \RAM{} running  Ubuntu 22.04.

For the experiments, we used two ontologies that contain negative constraints and have been used in previous research papers based on queries with negation \cite{Alfonso2017, Alfonso2018}. The first is the Lehigh University Benchmark (\lubm{}) ontology \cite{lubm}, enriched with 70 additional disjoint classes axioms added for the atomic \emph{sibling} classes, i.e., for classes asserted to share the same super-class. 
Secondly, we used the \travel{} ontology\footnote{\url{https://protege.stanford.edu/ontologies/travel.owl}} that contains 10 disjoint class axioms. The OWL 2 ER \cite{owl2er} fragment of both ontologies was translated into existential rules. We were not able to prove the \fus{} property for the set of existential rules obtained from neither of the two ontologies we used. 

We prepared 500 \NCQa{}s for each ontology and \ecompleto{}  produced finite \UCQ{}-rewritings for all the \NCQa{}. Each of the queries contains three atoms and two of them are negated. Each query contains one variable in the frontier, which is also an answer variable. The queries were generated by performing Association Rule Mining \cite{ar} on a dataset obtained from the assertions of the ontologies. The ontologies and the queries used are publicly available \footnote{\url{http://image.ntua.gr/~gardero/completo3.0/ontologies/}}.

\begin{table}[h]
\begin{center}
\begin{minipage}{\textwidth}
\caption{Distribution metrics computed on the query rewriting runtime and the memory used for both ontologies.}\label{tabMetrics}
\begin{tabular*}{\textwidth}{@{\extracolsep{\fill}}lcccc@{\extracolsep{\fill}}}
\toprule%
& \multicolumn{2}{@{}c@{}}{LUBM} & \multicolumn{2}{@{}c@{}}{Travel} \\
\cmidrule{2-3}\cmidrule{4-5}%
Metric & Runtime (m) & Memory (Mb) & Runtime (m) & Memory (Mb)\\
\midrule

mean & 18.59 & 370.46 & 0.035 & 104.54 \\
std & 1.67 & 37.00 & 0.150 & 16.70\\
min & 15.33 & 328.00 & 0.020 & 93.00\\
25\% & 17.62 & 350.00 & 0.022 & 101.00\\
50\% & 18.43 & 359.00 & 0.023 & 103.00\\
75\% & 19.10 & 371.00 & 0.024 & 105.00\\
max & 24.76 & 526.00 & 2.454 & 337.00\\

\botrule
\end{tabular*}
%\footnotetext{Note: This is an example of table footnote. This is an example of table footnote this is an example of table footnote this is an example of~table footnote this is an example of table footnote.}
\end{minipage}
\end{center}
\end{table}

Table \ref{tabMetrics} shows the mean, std, min, max, and the 25th, 50th, and 75th percentiles of the \UCQ{} rewriting runtime and the used \RAM{} memory for both ontologies. The \UCQ{} rewriting runtime is on average 500 times faster for the \travel{} ontology than for the \lubm{} ontology. The rewriting process for the \travel{} ontology uses on average one third of the \RAM{} memory used to rewrite queries compared to the \lubm{} ontology.

\begin{figure}[h]%
\centering
\includegraphics[width=0.8\textwidth]{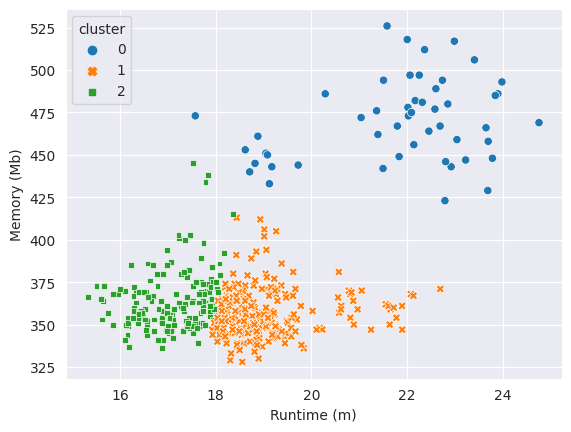}
\caption{Clustering of query rewriting runtime vs memory usage for LUBM.}\label{figCluster}
\end{figure}

\begin{figure}[h]%
\centering
\includegraphics[width=0.8\textwidth]{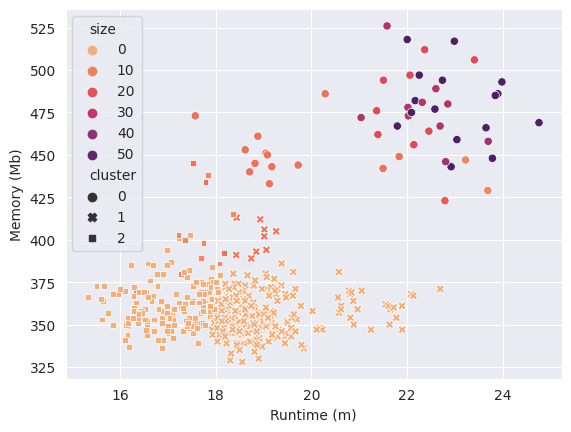}
\caption{Clustering of query rewriting runtime vs memory usage and size of the rewriting for LUBM.}\label{figClusterSize}
\end{figure}

Figure \ref{figCluster} shows the runtime vs \RAM{} memory of the rewriting process for each query of the \lubm{} ontology. There are 3 clusters that group the points according to their standardized coordinates. Figure \ref{figClusterSize} shows also the size of the \UCQ{} rewriting using colors. The darker the datapoint is, the larger the size of the corresponding rewriting is. The cluster grouping shows some correlation with the size of the rewriting.

\begin{table}[h]
\begin{center}
\begin{minipage}{\textwidth}
\caption{Distribution metrics computed on the query rewriting runtime and the memory used for both clusters in LUBM ontology.}\label{tabClusters}
\begin{tabular*}{\textwidth}{@{\extracolsep{\fill}}lccccc@{\extracolsep{\fill}}}
\toprule%
& \multicolumn{3}{@{}c@{}}{Runtime (m)} & \multicolumn{2}{@{}c@{}}{Memory (Mb)} \\
\cmidrule{2-4}\cmidrule{5-6}%
Metric & Cluster 0 & Cluster 1 & Cluster 2 & Cluster 0 & Cluster 1,2 \\
\midrule

count & 50 & 280 & 170 & 50 & 450\\
mean & 21.82 & 18.95 & 17.06 & 469.64 & 359.44\\
std & 1.72 & 0.88 & 0.68 & 24.31 & 15.49\\
min & 17.57 & 17.92 & 15.33 & 423.00 & 328.00\\
25\% & 21.37 & 18.40 & 16.53 & 449.25 & 349.00\\
50\% & 22.15 & 18.68 & 17.10 & 468.00 & 357.00\\
75\% & 22.90 & 19.20 & 17.65 & 485.75 & 367.00\\
max & 24.76 & 22.69 & 18.36 & 526.00 & 445.00\\
\botrule
\end{tabular*}
%\footnotetext{Note: This is an example of table footnote. This is an example of table footnote this is an example of table footnote this is an example of~table footnote this is an example of table footnote.}
\end{minipage}
\end{center}
\end{table}

Table \ref{tabClusters} shows the count, mean, std, min, max, and the 25th, 50th, and 75th percentiles of the \UCQ{} rewriting runtime and the used \RAM{} memory for each of the clusters of \lubm{} queries.

%\begin{figure}[h]%
%\centering
%\includegraphics[width=0.9\textwidth]{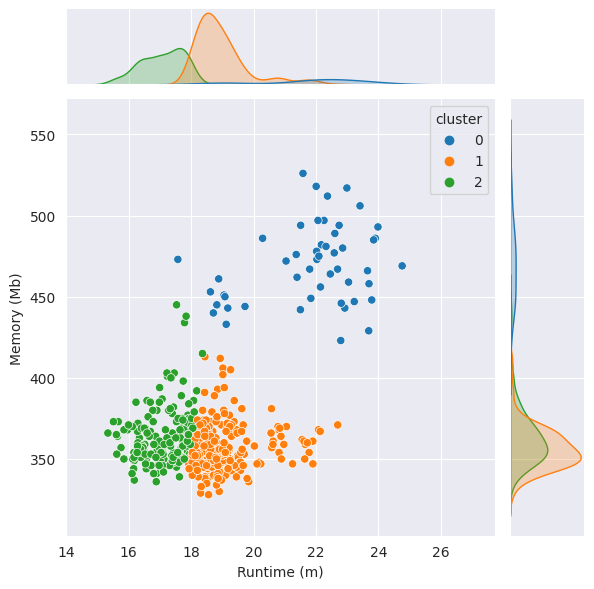}
%\caption{Clustering of query rewriting runtime vs memory usage for LUBM.}\label{fig1}
%\end{figure}

\begin{figure}[h]%
\centering
\includegraphics[width=0.8\textwidth]{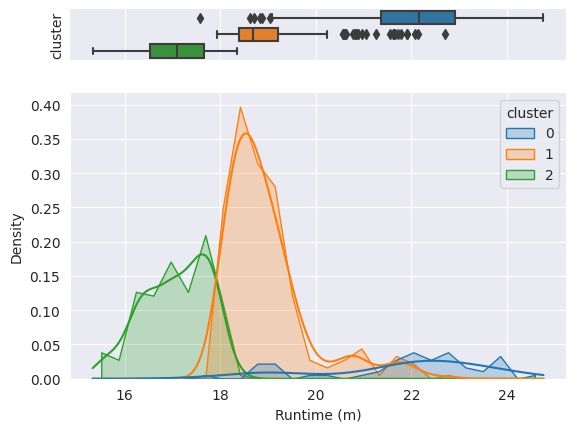}
\caption{Histogram of query rewriting runtime for LUBM.}\label{figRuntimeLubm}
\end{figure}

Figure \ref{figRuntimeLubm} shows the distribution of the query rewriting runtime for the \lubm{} ontology. The distribution is multimodal and it is split according to the cluster group.

\begin{figure}[h]%
\centering
\includegraphics[width=0.8\textwidth]{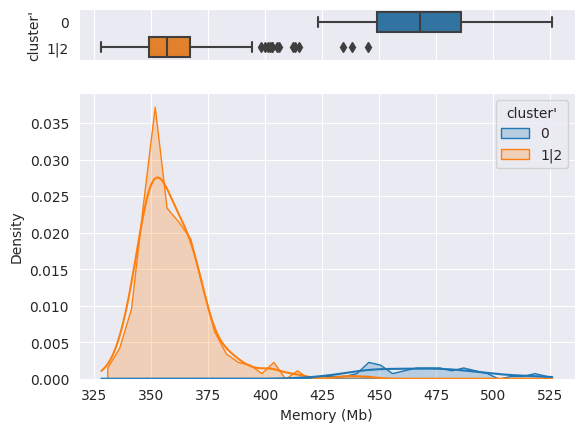}
\caption{Memory usage histogram for LUBM.}\label{figMemLubm}
\end{figure}

The distribution of the \RAM{} memory used in the rewriting process is shown in Figure \ref{figMemLubm}. We can notice a bimodal shape, despite having 3 clusters that group the queries.

%begin{figure}[h]%
%\centering
%\includegraphics[width=0.8\textwidth]{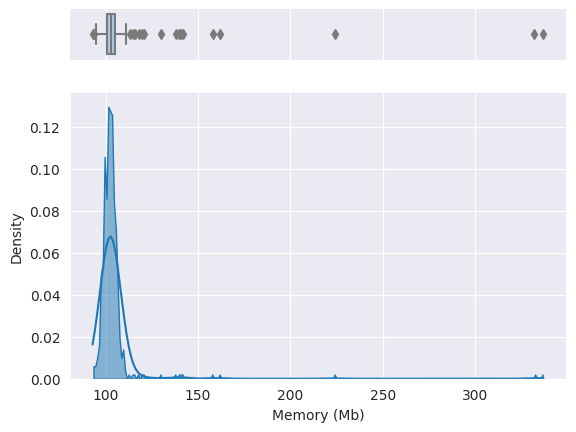}
%\caption{Memory usage histogram for Travel.}\label{fig1}
%\end{figure}

\section{Discussion}\label{discussion}

The experiments conducted in this study aimed to evaluate the performance of the \ecompleto{} system in producing UCQ-rewritings for NUCQs with respect to disjunctive existential rules. These experiments provide valuable insights into the efficiency and resource requirements of \ecompleto{} when handling complex queries and ontologies.

Various performance metrics were computed to evaluate \ecompleto{}'s performance in terms of query rewriting runtime and memory usage. These metrics included mean, standard deviation, minimum, maximum, and percentiles (25th, 50th, and 75th) for both runtime and memory consumption.

The results demonstrated a significant difference in query rewriting runtime between the LUBM and Travel ontologies. On average, the rewriting process for the Travel ontology was approximately 500 times faster than that for the LUBM ontology.

In terms of memory usage, the Travel ontology exhibited greater efficiency, utilizing only about one-third of the RAM memory required for query rewriting compared to the LUBM ontology.

Clustering analysis revealed that the queries formed distinct clusters based on their runtime and memory consumption. The size of the UCQ rewriting also demonstrated a degree of correlation with the cluster grouping.

The distribution of query rewriting runtime for the LUBM ontology was found to be multimodal, with clusters representing different query characteristics. Similarly, the distribution of RAM memory usage exhibited a bimodal pattern despite the presence of three query clusters.

\section{Conclusion}\label{conclusion}

In conclusion, this paper has investigated the domain of conjunctive query rewriting, an important technique used in the context of query answering within rule-based systems. The main objective of conjunctive query rewriting is to transform an input query into a complete rewriting that does not need the rules to represent all the possible responses. While this approach is very useful in dynamic data environments characterized by stable queries and rules, cases featuring static data combined with dynamically varying queries need a forward chaining approach.

The focus of our investigation is dedicated to the domain of conjunctive query answering, specifically on queries that may have universally quantified negation, within the framework of disjunctive existential rules. We have taken on the computational challenge involved in determining the existence of finite and complete \UCQ{}-rewritings, as well as the identification of finite unification sets (\fus{}) of rules. Our contributions include the introduction of two novel rule classes: \emph{connected linear rules} and \emph{connected domain restricted rules}, which have the \fus{} property and are more expressive than their antecedent rule classes, namely linear rules and domain-restricted rules.

Furthermore, we have introduced the concept of \emph{disconnected disjunction} in the framework of disjunctive existential rules. This concept allows us to achieve the \fus{} property for \emph{connected linear rules} and \emph{connected domain restricted rules} also in the presence of disjunctive rules with disconnected disjunction. 

In terms of practical implementation, we introduced our system, \ecompleto{}, specifically designed for the task of query rewriting within the framework of disjunctive existential rules. The system handles  \NUCQ{} queries that may include universally quantified negation. In addition, we expanded the \dlgpp{} format to facilitate the specification of disjunctive existential rules and the inclusion of negated atoms within queries.

The empirical evaluation of our system included a series of experiments conducted on established ontologies, namely the Lehigh University Benchmark \lubm{} and \travel{}, both augmented with supplementary axioms. The outcome of these experiments showed the consistent performance of \ecompleto{} in generating finite \UCQ{}-rewritings for a diverse set of queries. Furthermore, the system exhibited acceptable efficiency during the rewriting process.

Finally, the experiments provided valuable insights into \ecompleto{}'s performance when producing \UCQ{}-rewritings for \NUCQ{}s with universally quantified negation and disjunctive existential rules. The significant differences in runtime and memory consumption between the \lubm{} and \travel{} ontologies emphasize the importance of considering ontology complexity when assessing \ecompleto{}'s performance. Clustering and distribution patterns offered additional insights into the performance of \ecompleto{} under different input queries. Overall, these findings contributed to the understanding of our system's capabilities and provided valuable information for researchers working with complex queries and ontologies in the context of disjunctive existential rules.

%\section{Conclusion}\label{sec13}

%Conclusions may be used to restate your hypothesis or research question, restate your major findings, explain the relevance and the added value of your work, highlight any limitations of your study, describe future directions for research and recommendations. 

%In some disciplines use of Discussion or 'Conclusion' is interchangeable. It is not mandatory to use both. Please refer to Journal-level guidance for any specific requirements. 

\backmatter

%%===========================================================================================%%
%% If you are submitting to one of the Nature Portfolio journals, using the eJP submission   %%
%% system, please include the references within the manuscript file itself. You may do this  %%
%% by copying the reference list from your .bbl file, paste it into the main manuscript .tex %%
%% file, and delete the associated \verb+\bibliography+ commands.                            %%
%%===========================================================================================%%

\bibliography{references}% common bib file
%% if required, the content of .bbl file can be included here once bbl is generated
%%\input sn-article.bbl

%% Default %%
%%\input sn-sample-bib.tex%

\end{document}